\documentclass[aps,prl,twocolumn]{revtex4-2}

\usepackage[T1]{fontenc}
\usepackage{amsmath,amssymb,mathtools}
\usepackage{amsthm}
\usepackage[colorlinks=true,citecolor=blue,linkcolor=blue,urlcolor=blue]{hyperref}
\usepackage{comment}

\newtheorem{theorem}{Theorem}

\newtheorem{lemma}{Lemma}

\newcommand{\Tr}{\operatorname{Tr}}
\newcommand{\supp}{\operatorname{supp}}
\newcommand{\PPT}{\operatorname{PPT}}

\newcommand{\cF}{\mathcal F}

\newcommand{\EcPPT}{E_{c,\PPT}^{\mathrm{exact}}}

\begin{document}
\title{PPT Entanglement with Correlated Catalysis: Monotones and Irreversibility}
\begin{abstract}
    Quantum catalysts can overcome otherwise impossible quantum state transformations without being consumed, and allowing them to become correlated with the output makes this assistance substantially more powerful. This raises a fundamental question for entanglement theory: which limitations on state manipulation remain when such correlated catalysts are freely available? We answer this question in the positive-partial-transpose (PPT) resource theory, which allows a substantially broader class of operations than local operations and classical communication (LOCC). We identify general conditions under which regularized relative-entropy measures become strongly superadditive, and use them to construct monotones that constrain correlated catalytic PPT transformations without any knowledge of the catalyst. In particular, we prove that the regularized PPT relative entropy is fully additive and strongly superadditive, resolving an open problem in entanglement theory. Most importantly, these constraints show that even arbitrary correlated catalysts cannot restore asymptotic reversibility: for an explicit state, the optimal entanglement distillation rate remains strictly smaller than the entanglement cost. Thus, substantial catalytic assistance does not remove some of the fundamental limitations of mixed-state entanglement manipulation.
\end{abstract}
\author{Jingsong Ao}
\author{Aby Philip}
\author{Alexander Streltsov}
\affiliation{Institute of Fundamental Technological Research, Polish Academy of Sciences, Pawi\'{n}skiego 5B, 02-106 Warsaw, Poland}
\maketitle

\paragraph{Introduction.}
Quantum entanglement is a central resource in quantum information processing. A basic goal of entanglement theory is not only to determine whether a state is entangled, but also to quantify the amount of entanglement it contains and to characterize which state transformations are possible under a prescribed class of operations \cite{VedralEtAl1997,VedralPlenio1998,Vidal2000,qe,qr}. Resource monotones provide a natural tool for both purposes: because they cannot increase under the allowed operations, they impose necessary constraints on state conversion and quantify limitations on resource manipulation \cite{Vidal2000,HorodeckiLimits2000,BrandaoGour2015}. Two particularly useful properties concern how a resource monotone behaves when systems are combined. For independent systems, one may require that the resources of the subsystems to add exactly to the totoal resource; when correlations are present, one may require the total resource to remain at least as large as the sum of the resources of the individual subsystems. These properties are known as additivity and strong superadditivity, respectively \cite{DonaldHorodeckiRudolph2002}.

Catalysts provide an additional way to assist quantum state transformations. A catalyst is an auxiliary quantum state that can enable an otherwise impossible transformation while being recovered at the end of the process \cite{JonathanPlenio1999,KondraDattaStreltsov2021}. Correlated catalysis allows greater freedom: the catalyst must still be recovered, but it may become correlated with the transformed system \cite{cata,FundamentalLimitsCorrelatedCatalysis}. This enlarges the class of possible catalytic transformations. But, in general, the catalyst enabling a given transformation may not be known in advance. Therefore, it is especially useful to identify resource monotones that depend only on the initial and final states of the system and can rule out transformations without specifying the catalyst.

A natural question is then what properties allow a resource monotone to provide such catalyst-independent constraints. A recent result~\cite{FundamentalLimitsCorrelatedCatalysis} gives a simple answer: under mild continuity assumptions, additivity and strong superadditivity are sufficient for a resource monotone to remain monotone under correlated catalysis. Before the transformation, additivity allows the resource of the independent system and catalyst to be evaluated separately. After the transformation, strong superadditivity provides a lower bound in terms of the system and catalyst even when they have become correlated. Since the catalyst is returned in the same state, its contribution cancels, leaving a constraint that depends only on the initial and final states of the system.

Resource monotones with all of the properties required above are, however, remarkably scarce. Even for entanglement manipulation under local operations and classical communication (LOCC), only a few examples are known,such as the squashed entanglement~\cite{ChristandlWinter} and conditional entanglement of mutual information~\cite{YangHorodeckiWang2008}. This raises a broader question: rather than searching for such monotones one by one, can one identify general conditions that guarantee strong superadditivity? We answer this question for a broad class of regularized relative-entropy measures. We identify general structural conditions under which strong superadditivity follows. Together with known additivity results \cite{BeigiRubboliTomamichel2026,FangFawziFawzi2026}, this provides a systematic route for constructing resource monotones that remain valid under correlated catalysis.

We develop this idea in the resource theory based on positive partial transpose (PPT). A bipartite state is PPT if it remains positive under partial transposition \cite{Peres1996,HorodeckiEtAl1996}, and the corresponding free operations are completely PPT-preserving maps. These operations form a larger class than LOCC, so impossibility results established in this more permissive setting also apply to LOCC transformations. PPT-based quantities have moreover played a central role in bounding entanglement distillation and entanglement cost. Yet, despite this long history, no systematic family of additive and strongly superadditive monotones was known in this setting.

We apply the aforementioned conditions to PPT entanglement using the hierarchy introduced by Wang et al., in \cite{WangJingZhu2025} and further studied in~\cite{FangFawziFawzi2026}. The basic idea is to replace the PPT states by a sequence of larger reference sets that are easier to handle mathematically while still retaining information about PPT entanglement. We denote these sets by $\PPT_k$; they contain all PPT states, and the first level $\PPT_1$ coincides with the Rains set \cite{Rains1999}. At every level $k$, the conditions applies and produces a regularized relative-entropy monotone that is additive and strongly superadditive. Therefore, we obtain a family of monotones that can constrain correlated catalytic transformations.

The hierarchy also allows us to resolve an open question about the regularized PPT relative entropy. This quantity is known to be weakly additive: when independent systems are prepared in the same state, the resource assigned to the joint system is simply the sum of the resources of the individual systems \cite{EisertAudenaertPlenio2003}. What remained open was full additivity, where the independent systems may be in different states, as well as strong superadditivity when correlations are present. We consider the set $\PPT_\infty$ and prove that its regularized relative entropy is exactly equal to the regularized PPT relative entropy. This identity establishes full additivity and strong superadditivity of the regularized PPT relative entropy for arbitrary bipartite states, and hence its monotonicity under correlated catalysis.

After we obtain a family of monotones that can constrain correlated catalytic transformations, we turn our attention to the reversibility. Reversibility is a basic question in asymptotic entanglement manipulation. Starting from many copies of a state, one may first distill maximally entangled states and then use them to prepare the original state again. The process is reversible if the amount that can be distilled matches the amount required for preparation; a strict gap between the distillation rate and the entanglement cost means that some entanglement is inevitably lost. Whether this loss can be removed by enlarging the allowed operations or by supplying additional resources has therefore been studied extensively. Lami and Regula showed that irreversibility persists even under the broad class of non-entangling operations, and later established a strong form of irreversibility under dually non-entangling operations \cite{LamiRegula2023,LamiRegulaDNE2024}. Interestingly, sufficiently general probabilistic transformations can instead restore reversibility \cite{RegulaLami2024}, showing that the answer depends sensitively on what additional freedom is allowed. For catalytic assistance, Lami, Regula, and Streltsov proved that even correlated catalysts cannot restore reversibility under LOCC: PPT bound entangled states remain undistillable while having nonzero entanglement cost \cite{LamiRegulaStreltsov2024}. This argument, however, does not establish irreversibility in the PPT resource theory, because PPT states are free in that setting.

We show that irreversibility nevertheless persists under completely PPT-preserving operations even when arbitrary correlated catalysts are allowed. Using the state introduced by Wang and Duan to demonstrate PPT irreversibility \cite{WangDuan2017}, we show that finite-level monotone from our hierarchy gives an upper bound on the correlated-catalytic distillation rate, while the regularized PPT relative entropy gives a lower bound on the correlated-catalytic entanglement cost. Both bounds coincide with the corresponding rates achievable without a catalyst. The optimal distillation rate therefore remains strictly smaller than the entanglement cost: allowing arbitrary correlated catalysts does not remove the irreversibility of PPT entanglement manipulation.

\paragraph{Definitions.} Let \(A\) and \(B\) denote finite-dimensional quantum systems held by Alice and Bob, respectively, with associated Hilbert spaces \(\mathcal H_A\) and \(\mathcal H_B\). The composite system \(AB\) is associated with the Hilbert space \(\mathcal H_{AB}:=\mathcal H_A\otimes\mathcal H_B\). We denote by \(\mathcal L(AB):=\mathcal L(\mathcal H_{AB})\) the space of linear operators on \(\mathcal H_{AB}\), and write
\begin{equation}
\mathcal D(AB)
:=
\left\{
\rho_{AB}\in\mathcal L(AB):
\rho_{AB}\geq0,\ 
\Tr\rho_{AB}=1
\right\}
\end{equation}
for the set of density operators of the composite system \(AB\). The set of states with positive partial transpose (PPT) is
\begin{equation}
\PPT(A:B)
:=
\left\{
\sigma_{AB}\in\mathcal D(AB):
\sigma_{AB}^{\Gamma_B}\geq0
\right\},
\end{equation}
where \(\Gamma_B\) denotes the partial transpose on subsystem \(B\) with respect to a fixed orthonormal basis of \(\mathcal H_B\)~\cite{Peres1996,HorodeckiEtAl1996}.

A quantum channel \(\Lambda_{AB\to A'B'}\) is completely PPT-preserving if \(\Gamma_{B'}\circ\Lambda\circ\Gamma_B\) is completely positive.

We consider the sequence of comparison sets \(\{\PPT_k(A:B)\}_{k\geq1}\) introduced by Wang, Jing, and Zhu \cite{WangJingZhu2025}. The first member is the Rains set
\cite{Rains99},
\begin{equation}
\PPT_1(A:B)
:=
\left\{
\tau_{AB}\geq0:
\left\|\tau_{AB}^{\Gamma_B}\right\|_1\leq1
\right\}.
\end{equation}
For \(k\geq2\), $\PPT_{k}(A:B)$ are defined recursively by \cite{WangJingZhu2025}
\begin{multline}
\PPT_{k}(A:B)
:=
\bigl\{\tau_{AB}\geq0:\;
\exists\,\eta_{AB}\in\PPT_{k-1}(A:B)\\\text{such that }
-\eta_{AB}\leq\tau_{AB}^{\Gamma_B}\leq\eta_{AB}
\bigr\}.
\end{multline}
These sets satisfy
\cite{WangJingZhu2025}
\begin{equation}
\PPT(A:B)
\subseteq
\PPT_{k+1}(A:B)
\subseteq
\PPT_k(A:B).
\label{eq:pptk-nesting}
\end{equation}

We define
\begin{equation}
\PPT_\infty(A:B)
:=
\bigcap_{k\geq1}\PPT_k(A:B).
\label{eq:ppt-infty-main}
\end{equation}

For \(\rho\in\mathcal D(AB)\) and a positive semidefinite operator \(\tau\), the relative entropy is defined as \cite{MuellerHermesReeb2017}
\begin{equation}
D(\rho\Vert\tau)
:=
\begin{cases}
\Tr[\rho(\log_2\rho-\log_2\tau)],
&\supp\rho\subseteq\supp\tau,\\
+\infty,
&\text{otherwise}.
\end{cases}
\end{equation}
The \(\PPT_k\) relative entropy of entanglement is
\cite{WangJingZhu2025,FangFawziFawzi2026}
\begin{equation}
D_{\PPT_k}(\rho)
:=
\min_{\tau\in\PPT_k(A:B)}
D(\rho\Vert\tau),
\end{equation}
and its regularization is
\cite{FangFawziFawzi2026}
\begin{equation}
D_{\PPT_k}^{\infty}(\rho)
:=
\lim_{n\to\infty}
\frac1n
D_{\PPT_k(A^n:B^n)}
\bigl(\rho^{\otimes n}\bigr).
\end{equation}
We also consider the regularized PPT relative entropy of entanglement
\cite{AudenaertEtAl2002},
\begin{equation}
D_{\PPT}^{\infty}(\rho)
:=
\lim_{n\to\infty}
\frac1n
\min_{\sigma_n\in\PPT(A^n:B^n)}
D\bigl(\rho^{\otimes n}\Vert\sigma_n\bigr).
\end{equation}

We say that a state \(\rho_{AB}\) can be transformed into
\(\sigma_{A'B'}\) with a correlated catalyst if, for every
\(\varepsilon>0\), there exist a bipartite catalyst state
\(\tau_{CD}\), with \(A A' C\) held by Alice and \(BB'D\) by Bob, and a
completely PPT-preserving channel~\cite{cata,LamiRegulaStreltsov2024}
\[
\Lambda:AC:BD\longrightarrow A'C:B'D
\]
such that, for
\[
\omega_{A'B'CD}
=
\Lambda(\rho_{AB}\otimes\tau_{CD}),
\]
one has
\[
\omega_{CD}=\tau_{CD},
\qquad
\frac12\left\|\omega_{A'B'}-\sigma_{A'B'}\right\|_1
\leq \varepsilon .
\].

\paragraph{Main Results.} With the definitions out of way, we can now proceed to the results. 

\begin{theorem}\label{thm:reg-ppt-k-corr-catal}
For every finite \(k\), the regularized \(\PPT_k\)-relative entropy of entanglement \(D_{\PPT_k}^{\infty}\) is monotone under correlated catalytic transformations implemented by completely PPT-preserving maps.
\end{theorem}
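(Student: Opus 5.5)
The plan is to use the general principle from \cite{FundamentalLimitsCorrelatedCatalysis} quoted in the introduction. If a monotone $M$ is nonincreasing under the free operations, additive on tensor products, strongly superadditive, and suitably continuous, then it is monotone under correlated catalysis. For $M=D_{\PPT_k}^{\infty}$ the chain of inequalities is
\begin{equation*}
M(\rho_{AB})+M(\tau_{CD})=M(\rho_{AB}\otimes\tau_{CD})\geq M(\omega_{A'B'CD})\geq M(\omega_{A'B'})+M(\tau_{CD}).
\end{equation*}
Cancelling the finite quantity $M(\tau_{CD})$, which is bounded by $\log$ of the local dimension, and letting $\varepsilon\to0$ gives $M(\rho)\geq M(\sigma)$. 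So I must establish four properties: monotonicity under completely PPT-preserving channels, full additivity, strong superadditivity, and the continuity needed for the limit $\varepsilon\to0$.

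\emph{Monotonicity.} For a completely PPT-preserving $\Lambda$ and any $\tau\in\PPT_k$, I will show by induction on $k$ that $\Lambda(\tau)\in\PPT_k$ (up to normalization, which only lowers the relative entropy).
\begin{itemize}
\item Base case $k=1$: $\Lambda(\tau)^{\Gamma}=(\Gamma\circ\Lambda\circ\Gamma)(\tau^\Gamma)$. Since $\Gamma\circ\Lambda\circ\Gamma$ is CP and trace preserving, it is trace-norm contractive, so $\|\Lambda(\tau)^\Gamma\|_1\le1$.
\item Inductive step: the relation $-\eta\le\tau^\Gamma\le\eta$ is preserved by the CP map $\Gamma\circ\Lambda\circ\Gamma$. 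This gives $-\Lambda'(\eta)\le\Lambda(\tau)^\Gamma\le\Lambda'(\eta)$ with $\Lambda'=\Gamma\circ\Lambda\circ\Gamma$. I then need $\Lambda'(\eta)\in\PPT_{k-1}$. Since $\Lambda'$ is itself completely PPT-preserving ($\Gamma\circ\Lambda'\circ\Gamma=\Lambda$ is CP), the induction hypothesis applies to $\Lambda'$.
\end{itemize}
Data processing then gives $D_{\PPT_k}(\Lambda(\rho))\le D_{\PPT_k}(\rho)$ at every block length $n$. Because $\Lambda^{\otimes n}$ is also completely PPT-preserving, the regularized inequality follows.

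\emph{Additivity and strong superadditivity.} For additivity I will invoke the known result \cite{FangFawziFawzi2026}, which establishes additivity of $D^\infty_{\PPT_k}$ for every finite $k$. Strong superadditivity, $M(\omega_{A_1A_2:B_1B_2})\ge M(\omega_{A_1B_1})+M(\omega_{A_2B_2})$, is the main obstacle. My plan is to prove it through a dual or "generalized Stein/hypothesis-testing" representation of $D^\infty_{\PPT_k}$. Concretely:
\begin{enumerate}
\item Find a convex-analytic characterization in which the sets $\PPT_k$ are closed under partial trace, meaning $\Tr_{A_2B_2}\tau\in\PPT_k(A_1:B_1)$ whenever $\tau\in\PPT_k(A_1A_2:B_1B_2)$. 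This holds because partial trace is completely PPT-preserving, so the monotonicity argument above applies.
\item Show that the polar sets are closed under tensor products.
\end{enumerate}
Together these two structural conditions should yield superadditivity: the optimal dual witnesses for the two marginals tensor together into a feasible witness for the joint state. Here I would first try the variational (Donsker--Varadhan type) formula $D(\rho\|\tau)=\sup_{X}\{\Tr\rho X-\log\Tr e^{X}\tau\}$ adapted to regularized quantities via measured relative entropies and pinching.

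\emph{Continuity.} I expect continuity to follow from an asymptotic continuity bound of Fannes type, available because $\PPT_k$ contains the maximally mixed state. This bound must also handle the fact that the output is only $\varepsilon$-close to $\sigma$, which I will do by using lower semicontinuity or uniform continuity of $D^\infty_{\PPT_k}$.
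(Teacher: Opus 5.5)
\textbf{Verdict: genuine gap in strong superadditivity.} Your overall architecture matches the paper's: monotonicity under completely PPT-preserving maps, tensor additivity, strong superadditivity, lower semicontinuity, then the Rubboli--Tomamichel cancellation. Your induction proving $\Lambda(\PPT_k)\subseteq\PPT_k$ via $\Lambda'=\Gamma\circ\Lambda\circ\Gamma$ is correct, and more self-contained than the paper's appeal to Ref.~\cite[Proposition~S13]{LamiMeleRegula2025}. (For the cancellation chain you only need $M(\rho\otimes\tau)\le M(\rho)+M(\tau)$, which is immediate from tensor closure of $\PPT_k$. Full additivity then follows from strong superadditivity.)

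The gap is strong superadditivity, which is the only new ingredient of the theorem. You correctly call it the main obstacle, but you give a list of things to try rather than an argument, and parts of the list are off target:
\begin{itemize}
\item Closure of $\PPT_k$ under partial trace is true but does not help. It only yields $D_{\PPT_k}(\omega_{A_1A_2B_1B_2})\ge D_{\PPT_k}(\omega_{A_1B_1})$. In dual terms it is just the case $N=I$ of the support-function inequality you actually need.
\item The formula $\sup_X\{\Tr\rho X-\log\Tr e^{X}\tau\}$ is not $D(\rho\Vert\tau)$. It is the measured relative entropy $D_M(\rho\Vert\tau)$, which by Golden--Thompson is only a lower bound on $D(\rho\Vert\tau)$.
\item Polar tensor closure, $\PPT_k(A:B)^\circ\otimes\PPT_k(A':B')^\circ\subseteq\PPT_k(AA':BB')^\circ$, carries all the weight, yet you announce it as something to ``show'' with no argument. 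It is a nontrivial known result: Ref.~\cite[Example~3]{BeigiRubboliTomamichel2026} for $k=1$ and Ref.~\cite[Lemma~25]{FangFawziFawzi2026} for $k\ge2$. You should cite it.
\end{itemize}

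\textbf{How the paper closes it.} The paper uses Lemma~1. Under (i)--(iv), the generalized quantum Stein lemma identifies $D^\infty_{\cF}$ with the optimal type-II exponent at vanishing type-I error. Take near-optimal tests $T_{1,n},T_{2,n}$ for the two marginals. The product test $T_{1,n}\otimes T_{2,n}$ has type-I error on $\omega^{\otimes n}$ at most the sum of the marginal errors; this operator union bound needs no product structure of $\omega$. Its worst-case type-II error is at most the product of the marginal errors, because polar tensor closure is equivalent to $h_{\cF}(M\otimes N)\le h_{\cF}(M)\,h_{\cF}(N)$.

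\textbf{How your route could be completed.} Your dual idea can work without the Stein lemma, but you would need to supply three steps.
\begin{itemize}
\item[(a)] Sion's minimax over the compact convex set gives $\min_{\tau\in\cF}D_M(\rho\Vert\tau)=\sup_{X>0}\{\Tr\rho\log X-\log h_{\cF}(X)\}$.
\item[(b)] The witness $X_1\otimes X_2$ satisfies $\Tr\omega\log(X_1\otimes X_2)=\Tr\omega_1\log X_1+\Tr\omega_2\log X_2$ even for correlated $\omega$. Together with the support-function inequality, this makes the measured quantity strongly superadditive at every block length.
\item[(c)] Symmetrize the optimizer over $\cF(A^n:B^n)$ and pinch, using that permutation-invariant operators on $n$ copies have only $\mathrm{poly}(n)$ distinct eigenvalues. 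This gives $\min_{\tau\in\cF(A^n:B^n)}D(\rho^{\otimes n}\Vert\tau)\le\min_{\tau\in\cF(A^n:B^n)}D_M(\rho^{\otimes n}\Vert\tau)+O(\log n)$, so the two regularizations coincide.
\end{itemize}
That route would be more elementary than the paper's. As written, however, none of (a)--(c) is present, so strong superadditivity, and with it the theorem, is not established.
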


Finding resource monotones that remain valid under correlated catalysis is highly nontrivial. Such a resource monotone is required to be additive on tensor-product states, strongly superadditive on arbitrary correlated states, and lower semicontinuous \cite{FundamentalLimitsCorrelatedCatalysis}. Very few resource measures are known to satisfy these three properties simultaneously.

To the best of our knowledge, no such monotone was previously known for the resource theory in which PPT states and completely PPT-preserving maps are free. Even in entanglement theory under LOCC, only two examples are known: the squashed entanglement \cite{ChristandlWinter} and the conditional entanglement of mutual information \cite{YangHorodeckiWang2008}. No separation between these two quantities is known, and whether they coincide on all quantum states remains an open problem.

Theorem~\ref{thm:reg-ppt-k-corr-catal} therefore provides a family of correlated-catalytic monotones for PPT entanglement theory. Since every LOCC map is completely PPT-preserving, all the quantities \(D_{\PPT_k}^{\infty}\) are also monotone under correlated catalytic LOCC transformations.

The different ingredients entering Theorem~\ref{thm:reg-ppt-k-corr-catal} have distinct origins. Once the relevant tensor conditions for \(\PPT_k\) are verified, tensor-product additivity follows directly from Ref.~\cite[Proposition~9]{BeigiRubboliTomamichel2026}. Asymptotic
continuity, which in particular implies lower semicontinuity, follows from the standard continuity argument for relative-entropy distances to convex sets of positive operators~\cite{Winter2016}. The main new ingredient is strong superadditivity on arbitrary correlated states. We establish it through the following general criterion.

\begin{lemma}\label{lem:superadditivity}
Let \(\mathcal F\) assign to every finite-dimensional bipartite
system \(A:B\) a nonempty compact convex set
\(\mathcal F(A:B)\) of subnormalized positive semidefinite operators. Suppose that:

\textnormal{(i)}
\(\mathcal F(A:B)\) contains a full-rank state;

\textnormal{(ii)}
for every \(n\geq1\), the set \(\mathcal F(A^n:B^n)\) is invariant
under permutations of the \(n\) copies of \(AB\);

\textnormal{(iii)}
for arbitrary bipartite systems \(A:B\) and \(A':B'\),
\begin{equation*}
\mathcal F(A:B)\otimes\mathcal F(A':B')
\subseteq
\mathcal F(AA':BB');
\end{equation*}

\textnormal{(iv)}
for arbitrary bipartite systems \(A:B\) and \(A':B'\),
\begin{equation*}
\mathcal F(A:B)^{\circ}\otimes
\mathcal F(A':B')^{\circ}
\subseteq
\mathcal F(AA':BB')^{\circ},
\end{equation*}
where the positive polar of a set \(\mathcal F\) is defined as
\begin{equation*}
\mathcal F^{\circ}
:=
\left\{
X\geq0:
\Tr(X\tau)\leq1
\ \text{for all }\tau\in\mathcal F
\right\}.
\end{equation*}

Then the regularized relative entropy associated with
\(\mathcal F\) is strongly superadditive. Namely, for every
\(\omega_{AA'BB'}\in\mathcal D(AA'BB')\),
\begin{equation*}
D_{\mathcal F}^{\infty}(\omega_{AA'BB'})
\geq
D_{\mathcal F}^{\infty}(\omega_{AB})
+
D_{\mathcal F}^{\infty}(\omega_{A'B'}).
\end{equation*}
\end{lemma}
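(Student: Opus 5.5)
The plan is to pass to the measured relative entropy, whose variational formula is linear in the test operator. Condition (iv) on the polars then makes strong superadditivity almost immediate at each finite blocklength. Conditions (i)–(iii) are what allow the regularized quantity \(D_{\mathcal F}^\infty\) to be recovered from its measured version.

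\textbf{Step 1: variational formula.} For a state \(\rho\) and a compact convex set \(\mathcal F\) of subnormalized positive operators, define \(D_{M}(\rho\Vert\mathcal F):=\min_{\tau\in\mathcal F}D_M(\rho\Vert\tau)\). I would use the log-variational formula
\[
D_M(\rho\Vert\tau)=\sup_{X>0}\bigl[\Tr\rho\log X-\log\Tr X\tau\bigr].
\]
The function \(\Tr\rho\log X-\Tr X\tau+1\) is concave in \(X\) and linear in \(\tau\), and \(\mathcal F\) is convex and compact, so Sion's minimax theorem lets us swap min and sup. After rescaling \(X\), this yields
\[
D_M(\rho\Vert\mathcal F)=\sup\{\Tr\rho\log X:\ X>0,\ X\in\mathcal F^\circ\}.
\]
Condition (i) guarantees that every \(\rho\) has finite value.

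\textbf{Step 2: finite-\(n\) superadditivity.} Take near-optimal \(X_n\in\mathcal F(A^n:B^n)^\circ\) for \(\omega_{AB}^{\otimes n}\) and \(Y_n\in\mathcal F(A'^n:B'^n)^\circ\) for \(\omega_{A'B'}^{\otimes n}\). By (iv), after reordering the systems, \(X_n\otimes Y_n\in\mathcal F((AA')^n:(BB')^n)^\circ\). Since \(\log(X\otimes Y)=\log X\otimes 1+1\otimes\log Y\), we get
\[
\Tr\omega^{\otimes n}\log(X_n\otimes Y_n)=\Tr\omega_{AB}^{\otimes n}\log X_n+\Tr\omega_{A'B'}^{\otimes n}\log Y_n .
\]
Hence
\[
D_M(\omega^{\otimes n}\Vert\mathcal F)\ge D_M(\omega_{AB}^{\otimes n}\Vert\mathcal F)+D_M(\omega_{A'B'}^{\otimes n}\Vert\mathcal F).
\]
Dividing by \(n\) reduces the lemma to the identity \(\lim_n\frac1n D_M(\rho^{\otimes n}\Vert\mathcal F)=D_{\mathcal F}^\infty(\rho)\) for every state \(\rho\). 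Note that the regularized limits exist because (iii) makes \(n\mapsto D(\rho^{\otimes n}\Vert\mathcal F)\) subadditive.

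\textbf{Step 3: measured and unmeasured regularizations agree.} This is the main obstacle. The inequality \(D_M\le D\) gives one direction. For the other, I would use (ii) and convexity to symmetrize, so that it suffices to consider permutation-invariant \(\tau_n\in\mathcal F(A^n:B^n)\).

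To control the support, I would mix \(\tau_n\) slightly with \(\sigma^{\otimes n}\), where \(\sigma\) is the full-rank state from (i); this product lies in \(\mathcal F\) by (iii). The mixture stays in \(\mathcal F\), is still symmetric, and costs only an additive \(O(1)\) or \(o(n)\) penalty in the relative entropy.

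Then I would pinch with respect to the spectral projectors of a universal symmetric operator \(\omega_n^{u}\). This operator has \(\mathrm{poly}(n)\) distinct eigenvalues and dominates every permutation-invariant state up to a factor \(\mathrm{poly}(n)\). The standard Hayashi-type argument then gives
\[
D(\rho^{\otimes n}\Vert\tau_n)\le D_M(\rho^{\otimes n}\Vert\tau_n)+O(\log n),
\]
uniformly over symmetric \(\tau_n\) and up to an \(\epsilon\)-dependent loss in the mixing parameter. The point to check is that pinching \(\rho^{\otimes n}\) against a \(\tau_n\)-dependent basis still yields a measurement feasible for \(D_M\). This holds because \(D_M\) is itself a supremum over measurements, so any fixed pinching measurement provides a lower bound.

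Dividing by \(n\) and letting \(n\to\infty\) gives equality of the two regularizations. Combining this with Step 2 completes the proof.
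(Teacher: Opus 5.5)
Your proof is correct, and it takes a genuinely different route from the paper.

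The paper stays with the Umegaki quantity throughout. It invokes the generalized quantum Stein lemma (Theorem~5 of Beigi--Rubboli--Tomamichel, proved by Fang--Fawzi--Fawzi) to identify $D_{\mathcal F}^\infty$ with an optimal composite Stein exponent. It then uses product tests $T_{1,n}\otimes T_{2,n}$. Their type-I error on the correlated state $\omega^{\otimes n}$ vanishes by an operator union bound, and (iv), in its support-function form $h(M\otimes N)\le h(M)h(N)$, makes the worst-case type-II errors multiply.

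You instead dualize the measured relative entropy. Sion's theorem lets a single dual operator $X\in\mathcal F^\circ$ serve against every $\tau\in\mathcal F$, and your product $X_n\otimes Y_n$ plays the role of the product test. The exact splitting $\log(X\otimes Y)=\log X\otimes I+I\otimes\log Y$ over the marginals of $\omega$ replaces the union bound. You therefore get exact strong superadditivity of $D_M(\cdot\Vert\mathcal F)$ at every blocklength, and pinching then closes the $O(\log n)$ gap to $D$.

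The paper's route is shorter and has an operational reading, but it rests on the generalized Stein lemma, a deep result. Yours is elementary and self-contained: minimax, pinching, and Schur--Weyl counting. It also yields the explicit finite-$n$ bound $\frac1n D(\omega^{\otimes n}\Vert\mathcal F)\ge\frac1n D(\omega_{AB}^{\otimes n}\Vert\mathcal F)+\frac1n D(\omega_{A'B'}^{\otimes n}\Vert\mathcal F)-O(\log n/n)$. Finally, it makes visible where each hypothesis enters: (iv) in the dual step, (ii) in the pinching, (i) for finiteness, and (iii) only for existence of the limit.

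One repair is needed in Step~3. The polynomial count you need is for the spectrum of the symmetrized $\tau_n$ itself. A permutation-invariant operator has the Schur--Weyl form $\bigoplus_\lambda \tau_\lambda\otimes I_{P_\lambda}$, so it has at most $\sum_\lambda\dim Q_\lambda=\mathrm{poly}(n)$ distinct eigenvalues. Pinching with respect to $\tau_n$ then gives $D(\rho^{\otimes n}\Vert\tau_n)\le D(\mathcal P_{\tau_n}(\rho^{\otimes n})\Vert\tau_n)+\log|\mathrm{spec}\,\tau_n|\le D_M(\rho^{\otimes n}\Vert\tau_n)+O(\log n)$.

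Pinching with respect to a universal symmetric state would do nothing. Its spectral projectors are sums of isotypic projectors and already commute with $\rho^{\otimes n}$, and its domination property is not needed here. The mixing with $\sigma^{\otimes n}$ is also superfluous, since finiteness of $D_M(\rho^{\otimes n}\Vert\tau_n)$ already forces $\supp\rho^{\otimes n}\subseteq\supp\tau_n$.
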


The proof of Lemma~\ref{lem:superadditivity} is given in the supplemental material. For every finite \(k\), the comparison sets \(\PPT_k\) satisfy conditions~\textnormal{(i)}--\textnormal{(iv)}. Their tensor-product closure and the tensor-product closure of their positive polars were established in Refs.~\cite{FangFawziFawzi2026,BeigiRubboliTomamichel2026}, while the remaining conditions follow directly from their definition. Lemma~\ref{lem:superadditivity} therefore implies that \(D_{\PPT_k}^{\infty}\) is strongly superadditive for every finite \(k\). A further application to the resource theory of magic is given in the supplemental material.

We now turn to proving that \(D_{\PPT}^{\infty}\) is additive under tensor products and strongly superadditive for general states. The key step is to identify it with the regularized relative entropy associated with the comparison set \(\PPT_\infty\). As shown in Eq.~\eqref{eq:pptk-nesting}, the comparison sets become increasingly restrictive as \(k\) increases, while continuing to contain all PPT states. This motivates considering the regularized relative entropy associated with \(\PPT_\infty\).

\begin{theorem}\label{thm:ppt_inf_equal_ppt}
For every finite-dimensional bipartite state
\(\rho_{AB}\in\mathcal D(AB)\),
\begin{equation}
D_{\PPT_\infty}^{\infty}(\rho_{AB})
=
D_{\PPT}^{\infty}(\rho_{AB}).
\label{eq:ppt-infty-main-identity}
\end{equation}
Furthermore, \(D_{\PPT}^{\infty}\) is additive on tensor-product states and strongly superadditive on arbitrary correlated states. Consequently, it is monotone under correlated catalytic transformations implemented by completely PPT-preserving maps.
\end{theorem}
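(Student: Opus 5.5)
The identity \eqref{eq:ppt-infty-main-identity} is proved by sandwiching $D_{\PPT}^{\infty}$ between the finite-level quantities. All additivity, superadditivity and catalytic statements are then inherited from Theorem~\ref{thm:reg-ppt-k-corr-catal} and Lemma~\ref{lem:superadditivity} by passing to the limit $k\to\infty$. This route avoids checking conditions (i)--(iv) directly for $\PPT_\infty$; the polar condition (iv) in particular is not obviously stable under intersections.

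\emph{Step 1 (easy inequalities).} By Eq.~\eqref{eq:pptk-nesting} we have $\PPT(A^n:B^n)\subseteq\PPT_\infty(A^n:B^n)\subseteq\PPT_k(A^n:B^n)$ for every $n$. Taking minima and regularizing gives
\begin{equation*}
D_{\PPT_k}^{\infty}(\rho)\le D_{\PPT_{k+1}}^{\infty}(\rho)\le D_{\PPT_\infty}^{\infty}(\rho)\le D_{\PPT}^{\infty}(\rho).
\end{equation*}
So $\lim_k D^\infty_{\PPT_k}$ exists, and it suffices to prove $D_{\PPT}^{\infty}(\rho)\le\lim_{k\to\infty}D_{\PPT_k}^{\infty}(\rho)$.

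\emph{Step 2 (main obstacle: exchanging $k$ and $n$).} For fixed $n$ the sets $\PPT_k(A^n:B^n)$ are decreasing and compact with intersection $\PPT_\infty(A^n:B^n)$. Lower semicontinuity of $D(\rho^{\otimes n}\Vert\cdot)$ then gives $\lim_k D_{\PPT_k}(\rho^{\otimes n})=D_{\PPT_\infty}(\rho^{\otimes n})$. The difficulty is that the regularization in $n$ need not commute with $k\to\infty$.

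My plan is to prove a quantitative approximation that is uniform in $n$. Given $\tau\in\PPT_k(A^n:B^n)$, the definition supplies a chain $\tau=\eta_k,\eta_{k-1},\dots,\eta_1$ with $-\eta_{j-1}\le\eta_j^{\Gamma}\le\eta_{j-1}$ and $\|\eta_1^{\Gamma}\|_1\le1$. From such a chain I would build a genuinely PPT operator, for instance a weighted combination of the $\eta_j$ and $\eta_j^{\Gamma}$ in the spirit of the Rains-to-PPT conversions. The goal is an operator $\sigma$ with $\sigma\ge c_k^{-1}\tau$, $\sigma^{\Gamma}\ge0$ and $\Tr\sigma\le1$, where the constant $c_k$ satisfies $\log c_k=o(1)\cdot n$ as $k\to\infty$. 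A bound of the form $c_k\le 2^{n\epsilon_k}$ with $\epsilon_k\to0$ would suffice. Operator monotonicity of the logarithm then yields
\begin{equation*}
\tfrac1nD_{\PPT}(\rho^{\otimes n})\le\tfrac1nD_{\PPT_k}(\rho^{\otimes n})+\epsilon_k.
\end{equation*}
Taking $n\to\infty$ and then $k\to\infty$ closes the gap.

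If a direct construction fails, the fallback is the dual route. I would write both regularized quantities via Stein's lemma, as optimal type-II error exponents against the respective sets. Then I would use permutation symmetry (condition (ii)) and de Finetti/postselection reductions to compare optimal tests in $\PPT_k^{\circ}$ and $\PPT^{\circ}$, at a polynomial cost in $n$. This is the step I expect to carry essentially all the difficulty.

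\emph{Step 3 (inheritance).} Once $D^\infty_{\PPT}=\lim_kD^\infty_{\PPT_k}=\sup_kD^\infty_{\PPT_k}$ is established, each $D^\infty_{\PPT_k}$ is additive (via \cite{BeigiRubboliTomamichel2026}) and strongly superadditive by Lemma~\ref{lem:superadditivity}. Both properties are equalities or inequalities between finite values, so they survive pointwise limits, and $D_{\PPT}^{\infty}$ is additive and strongly superadditive. Each $D^\infty_{\PPT_k}$ is asymptotically continuous, hence continuous, so the supremum $D^\infty_{\PPT}$ is lower semicontinuous. The criterion of \cite{FundamentalLimitsCorrelatedCatalysis} then gives monotonicity under correlated catalytic completely PPT-preserving transformations. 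Alternatively, for $\rho\to\sigma$ catalytically one has $D^\infty_{\PPT_k}(\rho)\ge D^\infty_{\PPT_k}(\sigma)$ for every $k$ by Theorem~\ref{thm:reg-ppt-k-corr-catal}, and letting $k\to\infty$ gives the claim directly.
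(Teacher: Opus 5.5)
\textbf{There is a genuine gap in Step~2.} Your Steps~1 and~3 are sound (Step~3 only conditionally), but Step~2 carries the entire content of the theorem, and you do not supply it. You describe two possible strategies and carry out neither.

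The obstacle is also misdiagnosed. Exchanging $k\to\infty$ with $n\to\infty$ could at best give $\lim_k D^{\infty}_{\PPT_k}=D^{\infty}_{\PPT_\infty}$. At fixed $n$ the level-$k$ quantities converge to $D_{\PPT_\infty}(\rho^{\otimes n})$, not to $D_{\PPT}(\rho^{\otimes n})$. These one-shot quantities differ in general, because $\PPT_\infty$ contains subnormalized entangled operators such as $2^{-\EcPPT(\sigma)}\sigma$. So the identity $D^{\infty}_{\PPT_\infty}=D^{\infty}_{\PPT}$ is intrinsically asymptotic.

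Your direct plan asks for a PPT operator $\sigma\geq 2^{-n\epsilon_k}\tau$ for \emph{every} $\tau\in\PPT_k(A^n:B^n)$, uniformly in $n$. Nothing in your proposal provides this one-shot domination, even at $k=\infty$. The natural mechanism, running an exact PPT preparation on a separable input, dominates tensor powers $\sigma^{\otimes m}$ only with a rate loss as $m\to\infty$. Nothing you give controls one-shot versus asymptotic exact cost uniformly over arbitrary $n$-copy operators. The Stein/de~Finetti fallback is likewise unspecified: it does not say how tests that are good against $\PPT_k$ would become good against $\PPT$.

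\textbf{The missing idea is the paper's three-step argument.}
(1) From the Lami--Mele--Regula limit $\chi_k(\sigma)\to 2^{\EcPPT(\sigma)}$ one gets $\PPT_\infty=\{t\sigma:\,0\leq t\leq 2^{-\EcPPT(\sigma)}\}$. This yields the single-copy formula $D_{\PPT_\infty}(\rho)=\min_{\sigma}[D(\rho\Vert\sigma)+\EcPPT(\sigma)]$.
(2) Run an exact PPT protocol $\Lambda_m(\Phi_{M_m})=\sigma^{\otimes m}$ on the separable state $\zeta_{M_m}\geq M_m^{-1}\Phi_{M_m}$. This gives a PPT state $\omega_m\geq M_m^{-1}\sigma^{\otimes m}$, hence $D^{\infty}_{\PPT}(\rho)\leq D(\rho\Vert\sigma)+\EcPPT(\sigma)$.
(3) Apply this to $\rho^{\otimes n}$ and use $D^{\infty}_{\PPT}(\rho^{\otimes n})=nD^{\infty}_{\PPT}(\rho)$. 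This gives $D_{\PPT_\infty}(\rho^{\otimes n})\geq nD^{\infty}_{\PPT}(\rho)$ for every $n$.

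The key device is a second regularization over $m$ copies of the $n$-copy optimizer, which makes uniform-in-$n$ estimates unnecessary.

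Your reason for avoiding $\PPT_\infty$ is also unfounded. Tensor closure passes trivially to intersections. For a decreasing sequence of compact sets, $h_{\PPT_\infty}(X)=\lim_k h_{\PPT_k}(X)$, so the support-function submultiplicativity equivalent to (iv) survives the limit. The paper therefore applies Lemma~\ref{lem:superadditivity} and the Beigi--Rubboli--Tomamichel additivity criterion directly to $\PPT_\infty$, then transfers the properties through the identity. The statement $\lim_k D^{\infty}_{\PPT_k}=D^{\infty}_{\PPT}$ that your Step~3 relies on is stronger than the theorem requires.
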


This result is nontrivial because the positive polar of the PPT set is not closed under tensor products, preventing the existing criterion from establishing either additivity on product states or superadditivity on correlated states. The set \(\PPT_\infty\), however, satisfies the required tensor and positive-polar tensor conditions. This allows the general criterion to be applied to \(D_{\PPT_\infty}^{\infty}\), then Eq.~\eqref{eq:ppt-infty-main-identity} transfers the resulting properties to \(D_{\PPT}^{\infty}\).

The monotones established above can be used to address a basic question in asymptotic entanglement manipulation: whether correlated catalysts can restore reversibility. Given many copies of a state, one may distill maximally entangled states from them or, conversely, consume maximally entangled states to prepare the state. Asymptotic manipulation is reversible when the optimal distillation rate equals the entanglement cost, and irreversible when these two quantities differ. Since correlated catalysts enlarge the set of possible transformations while being returned at the end, it is natural to ask whether they can make the two rates coincide.

Let \(E_{d,\PPT}^{\rm cc}\) and \(E_{c,\PPT}^{\rm cc}\) denote, respectively, the optimal asymptotic distillation rate and entanglement cost under completely PPT-preserving protocols assisted by correlated catalysts. Precise definitions are given in the supplemental material.

\begin{theorem}\label{thm:cc-ppt-irreversibility}
Asymptotic entanglement manipulation under completely PPT-preserving operations remains irreversible in the presence of arbitrary correlated catalysts. In particular, let
\begin{equation}
\begin{aligned}
\rho_v&=\frac12 |v_1\rangle\!\langle v_1|
       +\frac12 |v_2\rangle\!\langle v_2|,\\
|v_1\rangle&=\frac{|01\rangle-|10\rangle}{\sqrt2},\\
|v_2\rangle&=\frac{|02\rangle-|20\rangle}{\sqrt2}.
\end{aligned}
\end{equation}
Then
\begin{equation}
E_{d,\PPT}^{\rm cc}(\rho_v)
=
\log_2\!\left(1+\frac1{\sqrt2}\right)
<
1
=
E_{c,\PPT}^{\rm cc}(\rho_v).
\label{eq:cc-irreversibility-main}
\end{equation}
\end{theorem}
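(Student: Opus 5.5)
The plan is to sandwich each catalytic rate between a catalyst-free achievable rate and a catalyst-independent monotone bound. For distillation, the lower bound is immediate. Taking a trivial catalyst shows $E_{d,\PPT}^{\rm cc}(\rho_v)\geq E_{d,\PPT}(\rho_v)$. The known result of Wang and Duan \cite{WangDuan2017} gives $E_{d,\PPT}(\rho_v)=\log_2(1+1/\sqrt2)$, which they obtain as the value of their semidefinite (Rains-type/$E_W$) bound together with a matching protocol. For cost, we similarly have $E_{c,\PPT}^{\rm cc}(\rho_v)\leq E_{c,\PPT}(\rho_v)$. The exact PPT cost of $\rho_v$ is known to equal $1$: Wang--Duan show it is given by $E_N$ or $E_\kappa$, and $\log_2\|\rho_v^{\Gamma}\|_1=1$. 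Since exact cost upper-bounds asymptotic cost, we get $E_{c,\PPT}^{\rm cc}(\rho_v)\leq 1$.

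For the upper bound on distillation, I would use Theorem~\ref{thm:reg-ppt-k-corr-catal} at a suitable finite level, most likely $k=2$. Suppose a correlated-catalytic protocol maps $\rho_v^{\otimes n}$ to something $\varepsilon$-close to $\Phi_2^{\otimes rn}$. Monotonicity gives $nD_{\PPT_k}^\infty(\rho_v)\geq D_{\PPT_k}^\infty(\omega)$, using additivity. Asymptotic continuity together with $D_{\PPT_k}^\infty(\Phi_m)=\log m$ then gives $r\leq D_{\PPT_k}^\infty(\rho_v)$. The value $D_{\PPT_k}^{\infty}(\Phi_m)=\log m$ follows from the sandwich $D_{\PPT}\geq D_{\PPT_k}\geq D_{\PPT_1}=\log m$. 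Next I must show $D_{\PPT_k}^\infty(\rho_v)\leq\log_2(1+1/\sqrt2)$. Subadditivity gives $D_{\PPT_k}^\infty\leq D_{\PPT_k}$, so it suffices to exhibit one $\tau\in\PPT_k$ with $D(\rho_v\|\tau)=\log_2(1+1/\sqrt2)$. I would build $\tau$ using the symmetry of $\rho_v$: take the support projector structure, $\tau = c\,\rho_v$-like plus corrections, and reuse the Wang--Duan optimizer of $E_W$. That optimizer is precisely of the form $-\eta\leq\tau^\Gamma\leq\eta$ with $\|\eta^\Gamma\|_1\leq1$, i.e.\ it lies in $\PPT_2$. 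Verifying that this witness satisfies the level-$2$ constraints, and checking that the value matches, is the key explicit computation. This step is the main obstacle: if level $2$ does not suffice, I would try increasing $k$ until the Wang--Duan certificate chain closes.

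For the lower bound on cost, I would use Theorem~\ref{thm:ppt_inf_equal_ppt}. Consider a correlated-catalytic protocol from $\Phi_2^{\otimes rn}$ to approximately $\rho_v^{\otimes n}$. By monotonicity, additivity and continuity, $r\geq D_{\PPT}^\infty(\rho_v)$, using $D_{\PPT}^\infty(\Phi_2)=1$. It remains to show $D_{\PPT}^\infty(\rho_v)\geq1$. I would use the known lower bound on the regularized PPT relative entropy from Wang--Duan/Audenaert et al. Alternatively, I could compute it directly: the support of $\rho_v$ is the antisymmetric-type subspace spanned by $v_1,v_2$, and a twirling/symmetry argument reduces $D_{\PPT}(\rho_v^{\otimes n})$ to a commutative optimization whose value is $n$. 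Concretely, the maximal overlap of a PPT state with the projector onto that subspace is $1/2$, and tensorizing this overlap bound via the PPT property of $\Gamma$ yields $\geq n$. Combining the four bounds gives equation~\eqref{eq:cc-irreversibility-main}, and the strict inequality follows since $\log_2(1+1/\sqrt2)\approx0.77<1$.
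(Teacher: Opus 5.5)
\textbf{There is a genuine gap in your upper bound on $E^{\rm cc}_{d,\PPT}(\rho_v)$.} Level $k=2$ cannot give $\log_2(1+1/\sqrt2)$, and neither can any $k\geq2$.

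Since $\PPT_{k+1}\subseteq\PPT_k$, the quantity $D_{\PPT_k}$ is nondecreasing in $k$. So your fallback of raising $k$ only weakens a distillation bound. In fact $D^\infty_{\PPT_2}(\rho_v)$ already equals $1$. To see this, let $P=2\rho_v$ be the projector onto $\mathrm{span}\{v_1,v_2\}$ and set
\[
S=\tfrac12\bigl(|00\rangle\langle00|+|01\rangle\langle01|+|10\rangle\langle10|+|02\rangle\langle02|+|20\rangle\langle20|\bigr)+\tfrac12\bigl(|11\rangle+|22\rangle\bigr)\bigl(\langle11|+\langle22|\bigr).
\]
A direct check gives $-S\leq P^{\Gamma}\leq S$ and $\|S^{\Gamma}\|_\infty=\tfrac12$.

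Take $\tau\in\PPT_2$ with witness $\eta\in\PPT_1$. The identity
\[
\Tr S\eta-\Tr P^\Gamma\tau^\Gamma=\tfrac12\Tr\bigl[(S-P^\Gamma)(\eta+\tau^\Gamma)\bigr]+\tfrac12\Tr\bigl[(S+P^\Gamma)(\eta-\tau^\Gamma)\bigr]
\]
shows the left side is nonnegative. Hence
\[
\Tr P\tau\leq\Tr S^\Gamma\eta^\Gamma\leq\|S^\Gamma\|_\infty\,\|\eta^\Gamma\|_1\leq\tfrac12 .
\]
This certificate tensorizes. Therefore $\Tr P^{\otimes n}\tau_n\leq2^{-n}$ for every $\tau_n\in\PPT_2(A^n:B^n)$, and so
\[
D(\rho_v^{\otimes n}\Vert\tau_n)\geq-\log_2\Tr P^{\otimes n}\tau_n\geq n .
\]
Combine this with $\PPT\subseteq\PPT_k\subseteq\PPT_2$ and with $D_{\PPT}(\rho_v)\leq1$, which is witnessed by the separable state $\tfrac14(|01\rangle\langle01|+|10\rangle\langle10|+|02\rangle\langle02|+|20\rangle\langle20|)$. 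The result is $D^\infty_{\PPT_k}(\rho_v)=1$ for every $k\geq2$.

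Consequently, your route yields only $E^{\rm cc}_{d,\PPT}(\rho_v)\leq1$, which does not separate distillation from the cost. Whatever certificate you borrow from Wang--Duan's distillation bound, no $\tau\in\PPT_2$ can satisfy $D(\rho_v\Vert\tau)=\log_2(1+1/\sqrt2)$.

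\textbf{The fix is to go down to $k=1$, as the paper does.} Theorem~\ref{thm:reg-ppt-k-corr-catal} with $k=1$, together with the correlated-catalytic rate bound of Lami--Regula--Streltsov (which you essentially rederive), gives $E^{\rm cc}_{d,\PPT}(\rho_v)\leq D^\infty_{\PPT_1}(\rho_v)$. Your single-copy strategy then works immediately with $\tau=\rho_v/\|\rho_v^\Gamma\|_1\in\PPT_1$. The reason is that $\rho_v^\Gamma$ has eigenvalue $\tfrac14$ four times, eigenvalues $\pm\tfrac{\sqrt2}{4}$, and zeros otherwise, so $\|\rho_v^\Gamma\|_1=1+1/\sqrt2$.

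\textbf{Smaller issues on the cost side.}
\begin{itemize}
\item Your claim $\log_2\|\rho_v^\Gamma\|_1=1$ is wrong; the value is $\log_2(1+1/\sqrt2)$.
\item $E_N$ and $E_\kappa$ are lower bounds on the exact PPT cost, so they could not give $E_{c,\PPT}(\rho_v)\leq1$ anyway.
\item That bound is nevertheless trivially true. One ebit yields $\rho_v$ by LOCC: a communicated coin flip selects a local rotation of $\Phi_2$ into $v_1$ or $v_2$, and the coin is then discarded.
\item The matching lower bound comes from Theorem~\ref{thm:ppt_inf_equal_ppt} and $D^\infty_{\PPT}(\rho_v)\geq1$, exactly as you say.
\item The certificate $S$ above is what makes your alternative ``tensorized overlap'' argument for $D^\infty_{\PPT}(\rho_v)\geq1$ rigorous, since $\PPT\subseteq\PPT_2$. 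A one-sided witness such as the antisymmetric projector does not tensorize, because its partial transpose has eigenvalue $-1$.
\end{itemize}
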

Wang and Duan previously established a strict gap between PPT entanglement distillation and entanglement cost without catalytic assistance \cite{WangDuan2017}. More broadly, the fate of reversibility is known to depend sensitively on how far the allowed operations are enlarged. Brand\~ao and Plenio proposed a reversible entanglement theory under asymptotically non-entangling operations, with the regularized relative entropy of entanglement governing asymptotic conversion \cite{BrandaoPlenio2008,BrandaoPlenio2010}. The original argument relied on the generalized quantum Stein lemma \cite{BrandaoPlenioStein2010}, whose proof was later found to contain a gap \cite{BertaEtAl2023}; this lemma and the associated second-law framework have since been re-established \cite{HayashiYamasaki2025}. By contrast, Lami and Regula showed that entanglement remains irreversible under non-entangling operations \cite{LamiRegula2023}. More recently, Lami, Regula, and Streltsov showed that irreversibility under LOCC persists even in the presence of correlated catalysts \cite{LamiRegulaStreltsov2024}. Their argument, however, relies on PPT bound entangled states that remain undistillable while having nonzero entanglement cost under correlated-catalytic LOCC. It therefore does not establish correlated-catalytic irreversibility under completely PPT-preserving operations, for which PPT states are free and have zero entanglement cost. Theorem~\ref{thm:cc-ppt-irreversibility} shows that irreversibility nevertheless persists in this different operational setting even when arbitrary correlated catalysts are allowed.

The proof also illustrates why it is useful to have more than one catalytic monotone. The general correlated-catalytic rate bounds of Ref.~\cite{LamiRegulaStreltsov2024} allow \(D_{\PPT_1}^{\infty}\) to control distillation and \(D_{\PPT}^{\infty}\) to control formation. For the state \(\rho_v\), the ordinary PPT rates and these two relative-entropy quantities are known exactly \cite{WangDuan2017}. Consequently,
\begin{align}
E_{d,\PPT}(\rho_v)
&\leq E_{d,\PPT}^{\rm cc}(\rho_v)
\leq D_{\PPT_1}^{\infty}(\rho_v)
=E_{d,\PPT}(\rho_v),\\
E_{c,\PPT}(\rho_v)
&\geq E_{c,\PPT}^{\rm cc}(\rho_v)
\geq D_{\PPT}^{\infty}(\rho_v)
=E_{c,\PPT}(\rho_v).
\end{align}
The outer inequalities simply use the fact that a trivial catalyst is always allowed, while the inner inequalities are the catalytic monotone bounds~\cite{LamiRegulaStreltsov2024}. Both preceding theorems enter essentially in this argument. Theorem~\ref{thm:reg-ppt-k-corr-catal}, specialized to $k=1$, yields the correlated-catalytic bound \(D_{\PPT_1}^{\infty}\) on distillation, whereas Theorem~\ref{thm:ppt_inf_equal_ppt} yields the corresponding \(D_{\PPT}^{\infty}\) bound on formation. Together, these two bounds establish the irreversibility in Eq.~\eqref{eq:cc-irreversibility-main}. The details are given in the supplemental material.

Moreover, taking nontrivial convex combinations of the squashed entanglement with each of \(D_{\PPT_k}^{\infty}\), \(k\geq1\), and \(D_{\PPT}^{\infty}\) further yields a family of faithful, additive, and strongly superadditive LOCC entanglement monotones.
\paragraph{Discussion.}
We have established a general criterion for strong superadditivity of regularized relative-entropy measures and applied it to PPT entanglement. This yields correlated-catalytic monotones from the \(\PPT_k\) hierarchy, establishes full additivity and strong superadditivity of \(D_{\PPT}^{\infty}\), and shows that PPT entanglement manipulation remains irreversible even with arbitrary correlated catalysts.

An important open question is whether the monotones \(D_{\PPT_k}^{\infty}\), \(k\geq1\), together with \(D_{\PPT}^{\infty}\), form a complete family of monotones for correlated catalytic transformations under completely PPT-preserving maps. More broadly, Lemma~\ref{lem:superadditivity} provides a general mechanism for constructing additive and strongly superadditive regularized relative-entropy measures from comparison sets satisfying suitable conditions. It remains to identify and classify further comparison sets with these properties, which may yield large new families of correlated-catalytic monotones in other quantum resource theories.

Another natural direction is to ask whether probabilistic transformations can restore reversibility in the PPT resource theory. Probabilistic protocols can restore reversibility under asymptotically resource-non-generating operations \cite{RegulaLami2024}. Whether probabilistic transformations can also restore reversibility when the allowed operations are required to be completely PPT-preserving remains open. An especially interesting question is how probabilistic transformations should be combined with catalysis. One natural formulation requires the catalyst to be recovered with certainty in every branch of the protocol, while only the transformation of the system is allowed to fail with some probability. Determining whether PPT entanglement becomes reversible under such probabilistic correlated-catalytic transformations would clarify how much additional freedom is actually required to overcome the irreversibility established here.

\emph{Acknowledgments.} This work was supported by the National Science Centre Poland (Grant No. 2022/46/E/ST2/00115 and 2024/55/B/ST2/01590). \emph{AI~Use~Disclosure.} During the preparation of this manuscript, the authors used OpenAI's ChatGPT 5.6 as an assistive tool for language editing, manuscript organization, bibliographic checking, and the exploration and verification of mathematical arguments. All AI-assisted suggestions were critically assessed by the authors, and the final proofs, references, statements, and conclusions were independently verified by the authors, who take full responsibility for the content of the manuscript.

\bibliography{refs}

@article{FundamentalLimitsCorrelatedCatalysis,
  title = {Fundamental Limits on Correlated Catalytic State Transformations},
  author = {Rubboli, Roberto and Tomamichel, Marco},
  journal = {Physical Review Letters},
  volume = {129},
  issue = {12},
  pages = {120506},
  numpages = {7},
  year = {2022},
  month = {Sep},
  publisher = {American Physical Society},
  doi = {10.1103/PhysRevLett.129.120506},
  url = {https://link.aps.org/doi/10.1103/PhysRevLett.129.120506}
}

@article{qe,
  title = {Quantum entanglement},
  author = {Horodecki, Ryszard and Horodecki, Pawe\l{} and Horodecki, Micha\l{} and Horodecki, Karol},
  journal = {Reviews of Modern Physics},
  volume = {81},
  issue = {2},
  pages = {865--942},
  numpages = {0},
  year = {2009},
  month = {Jun},
  publisher = {American Physical Society},
  doi = {10.1103/RevModPhys.81.865},
  url = {https://link.aps.org/doi/10.1103/RevModPhys.81.865}
}

@article{qr,
  title = {Quantum resource theories},
  author = {Chitambar, Eric and Gour, Gilad},
  journal = {Reviews of Modern Physics},
  volume = {91},
  issue = {2},
  pages = {025001},
  numpages = {48},
  year = {2019},
  month = {Apr},
  publisher = {American Physical Society},
  doi = {10.1103/RevModPhys.91.025001},
  url = {https://link.aps.org/doi/10.1103/RevModPhys.91.025001}
}

@article{cata,
   title={Catalysis of entanglement and other quantum resources},
   volume={86},
   ISSN={1361-6633},
   url={http://dx.doi.org/10.1088/1361-6633/acfbec},
   DOI={10.1088/1361-6633/acfbec},
   number={11},
   journal={Reports on Progress in Physics},
   publisher={IOP Publishing},
   author = {Datta, Chandan and Kondra, Tulja Varun
          and Miller, Marek and Streltsov, Alexander},
   year={2023},
   month=Oct, pages={116002} }

@article{YangHorodeckiWang2008,
  title = {An Additive and Operational Entanglement Measure: Conditional Entanglement of Mutual Information},
  author = {Yang, Dong and Horodecki, Micha\l{} and Wang, Z. D.},
  journal = {Physical Review Letters},
  volume = {101},
  issue = {14},
  pages = {140501},
  numpages = {4},
  year = {2008},
  month = {Sep},
  publisher = {American Physical Society},
  doi = {10.1103/PhysRevLett.101.140501},
  url = {https://link.aps.org/doi/10.1103/PhysRevLett.101.140501}
}

@article{ChristandlWinter,
   title={{“Squashed entanglement”: An additive entanglement measure}},
   volume={45},
   ISSN={1089-7658},
   url={http://dx.doi.org/10.1063/1.1643788},
   DOI={10.1063/1.1643788},
   number={3},
   journal={Journal of Mathematical Physics},
   publisher={AIP Publishing},
   author={Christandl, Matthias and Winter, Andreas},
   year={2004},
   month=Mar, pages={829–840} }

@article{WangJingZhu2025,
  title = {Computable and Faithful Lower Bound on Entanglement Cost},
  author = {Wang, Xin and Jing, Mingrui and Zhu, Chengkai},
  journal = {Physical Review Letters},
  volume = {134},
  issue = {19},
  pages = {190202},
  numpages = {8},
  year = {2025},
  month = {May},
  publisher = {American Physical Society},
  doi = {10.1103/PhysRevLett.134.190202},
  url = {https://link.aps.org/doi/10.1103/PhysRevLett.134.190202}
}

@article{MuellerHermesReeb2017,
   title={Monotonicity of the Quantum Relative Entropy Under Positive Maps},
   volume={18},
   ISSN={1424-0661},
   url={http://dx.doi.org/10.1007/s00023-017-0550-9},
   DOI={10.1007/s00023-017-0550-9},
   number={5},
   journal={Annales Henri Poincaré},
   publisher={Springer Science and Business Media LLC},
   author={Müller-Hermes, Alexander and Reeb, David},
   year={2017},
   month=Jan, pages={1777–1788} }

@article{Peres1996,
   title={Separability Criterion for Density Matrices},
   volume={77},
   ISSN={1079-7114},
   url={http://dx.doi.org/10.1103/PhysRevLett.77.1413},
   DOI={10.1103/physrevlett.77.1413},
   number={8},
   journal={Physical Review Letters},
   publisher={American Physical Society (APS)},
   author={Peres, Asher},
   year={1996},
   month=Aug, pages={1413–1415} }

@article{HorodeckiEtAl1996,
   title={Separability of mixed states: necessary and sufficient conditions},
   volume={223},
   ISSN={0375-9601},
   url={http://dx.doi.org/10.1016/S0375-9601(96)00706-2},
   DOI={10.1016/s0375-9601(96)00706-2},
   number={1-2},
   journal={Physics Letters A},
   publisher={Elsevier BV},
   author={Horodecki, Michał and Horodecki, Paweł and Horodecki, Ryszard},
   year={1996},
   month=Nov, pages={1–8} }

@article{Rains99,
   title={Rigorous treatment of distillable entanglement},
   volume={60},
   ISSN={1094-1622},
   url={http://dx.doi.org/10.1103/PhysRevA.60.173},
   DOI={10.1103/physreva.60.173},
   number={1},
   journal={Physical Review A},
   publisher={American Physical Society (APS)},
   author={Rains, E. M.},
   year={1999},
   month={July}, pages={173–178} }

@article{AudenaertEtAl2002,
  title = {Asymptotic relative entropy of entanglement for orthogonally invariant states},
  author = {Audenaert, K. and De Moor, B. and Vollbrecht, K. G. H. and Werner, R. F.},
  journal = {Physical Review A},
  volume = {66},
  issue = {3},
  pages = {032310},
  numpages = {11},
  year = {2002},
  month = {Sep},
  publisher = {American Physical Society},
  doi = {10.1103/PhysRevA.66.032310},
  url = {https://link.aps.org/doi/10.1103/PhysRevA.66.032310}
}

@article{FangFawziFawzi2026,
  author={Fang, Kun and Fawzi, Hamza and Fawzi, Omar},
  journal={IEEE Transactions on Information Theory}, 
  title={Efficient Approximation of Regularized Relative Entropies and Applications}, 
  year={2026},
  volume={72},
  number={4},
  pages={2330-2342},
  doi={10.1109/TIT.2026.3661603}}

@article{BeigiRubboliTomamichel2026,
author = {Beigi, Salman and Rubboli, Roberto and Tomamichel, Marco},
year = {2026},
month = {06},
pages = {},
title = {Additivity of quantum relative entropies as a single-copy criterion},
volume = {407},
journal = {Communications in Mathematical Physics},
doi = {10.1007/s00220-026-05653-x}
}

@article{FangFawziFawziAEP2024,
  author        = {Kun Fang and Hamza Fawzi and Omar Fawzi},
  title         = {Generalized quantum asymptotic equipartition},
  journal       = {arXiv preprint arXiv:2411.04035},
  year          = {2024},
  eprint        = {2411.04035},
  archivePrefix = {arXiv},
  primaryClass  = {quant-ph}
}

@article{JonathanPlenio1999,
   title={Entanglement-Assisted Local Manipulation of Pure Quantum States},
   volume={83},
   ISSN={1079-7114},
   url={http://dx.doi.org/10.1103/PhysRevLett.83.3566},
   DOI={10.1103/physrevlett.83.3566},
   number={17},
   journal={Physical Review Letters},
   publisher={American Physical Society (APS)},
   author={Jonathan, Daniel and Plenio, Martin B.},
   year={1999},
   month=Oct, pages={3566–3569} }

@article{Winter2016,
   title={Tight Uniform Continuity Bounds for Quantum Entropies: Conditional Entropy, Relative Entropy Distance and Energy Constraints},
   volume={347},
   ISSN={1432-0916},
   url={http://dx.doi.org/10.1007/s00220-016-2609-8},
   DOI={10.1007/s00220-016-2609-8},
   number={1},
   journal={Communications in Mathematical Physics},
   publisher={Springer Science and Business Media LLC},
   author={Winter, Andreas},
   year={2016},
   month=Mar, pages={291–313} }

@article{WangWildeSu2020,
  title = {Efficiently Computable Bounds for Magic State Distillation},
  author = {Wang, Xin and Wilde, Mark M. and Su, Yuan},
  journal = {Physical Review Letters},
  volume = {124},
  issue = {9},
  pages = {090505},
  numpages = {7},
  year = {2020},
  month = {Mar},
  publisher = {American Physical Society},
  doi = {10.1103/PhysRevLett.124.090505},
  url = {https://link.aps.org/doi/10.1103/PhysRevLett.124.090505}
}

@article{WangWildeSu2019,
   title={Quantifying the magic of quantum channels},
   volume={21},
   ISSN={1367-2630},
   url={http://dx.doi.org/10.1088/1367-2630/ab451d},
   DOI={10.1088/1367-2630/ab451d},
   number={10},
   journal={New Journal of Physics},
   publisher={IOP Publishing},
   author={Wang, Xin and Wilde, Mark M and Su, Yuan},
   year={2019},
   month=Oct, pages={103002} }

@article{LamiMeleRegula2025,
  title = {Computable Entanglement Cost under Positive Partial Transpose Operations},
  author = {Lami, Ludovico and Mele, Francesco Anna and Regula, Bartosz},
  journal = {Physical Review Letters},
  volume = {134},
  issue = {9},
  pages = {090202},
  numpages = {9},
  year = {2025},
  month = {Mar},
  publisher = {American Physical Society},
  doi = {10.1103/PhysRevLett.134.090202},
  url = {https://link.aps.org/doi/10.1103/PhysRevLett.134.090202}
}

@article{Rains1999,
   title={Bound on distillable entanglement},
   volume={60},
   ISSN={1094-1622},
   url={http://dx.doi.org/10.1103/PhysRevA.60.179},
   DOI={10.1103/physreva.60.179},
   number={1},
   journal={Physical Review A},
   publisher={American Physical Society (APS)},
   author={Rains, E. M.},
   year={1999},
   month=July, pages={179–184} }

@article{WangDuan2017,
   title={Irreversibility of Asymptotic Entanglement Manipulation Under Quantum Operations Completely Preserving Positivity of Partial Transpose},
   volume={119},
   ISSN={1079-7114},
   url={http://dx.doi.org/10.1103/PhysRevLett.119.180506},
   DOI={10.1103/physrevlett.119.180506},
   number={18},
   journal={Physical Review Letters},
   publisher={American Physical Society (APS)},
   author={Wang, Xin and Duan, Runyao},
   year={2017},
   pages = {180506},
   month=Nov }

@article{LamiRegulaStreltsov2024,
  author  = {Lami, Ludovico and Regula, Bartosz and Streltsov, Alexander},
  title   = {No-go theorem for entanglement distillation using catalysis},
  journal = {Physical Review A},
  volume  = {109},
  pages   = {L050401},
  year    = {2024},
  doi     = {10.1103/PhysRevA.109.L050401}
}

@article{VedralEtAl1997,
  author  = {V. Vedral and M. B. Plenio and M. A. Rippin and P. L. Knight},
  title   = {Quantifying Entanglement},
  journal = {Physical Review Letters},
  volume  = {78},
  pages   = {2275--2279},
  year    = {1997},
  doi     = {10.1103/PhysRevLett.78.2275}
}

@article{VedralPlenio1998,
  author  = {V. Vedral and M. B. Plenio},
  title   = {Entanglement Measures and Purification Procedures},
  journal = {Physical Review A},
  volume  = {57},
  pages   = {1619--1633},
  year    = {1998},
  doi     = {10.1103/PhysRevA.57.1619}
}

@article{Vidal2000,
  author  = {Guifre Vidal},
  title   = {Entanglement Monotones},
  journal = {Journal of Modern Optics},
  volume  = {47},
  pages   = {355},
  year    = {2000},
  doi     = {10.1080/09500340008244048}
}

@article{HorodeckiLimits2000,
  author  = {Micha{\l} Horodecki and Pawe{\l} Horodecki and Ryszard Horodecki},
  title   = {Limits for Entanglement Measures},
  journal = {Physical Review Letters},
  volume  = {84},
  pages   = {2014},
  year    = {2000},
  doi     = {10.1103/PhysRevLett.84.2014}
}

@article{DonaldHorodeckiRudolph2002,
  author  = {Matthew J. Donald and Micha{\l} Horodecki and Oliver Rudolph},
  title   = {The Uniqueness Theorem for Entanglement Measures},
  journal = {Jounal of Mathematical Physics},
  volume  = {43},
  pages   = {4252--4272},
  year    = {2002},
  doi     = {10.1063/1.1495917}
}

@article{BrandaoGour2015,
  author  = {Fernando G. S. L. Brand{\~a}o and Gilad Gour},
  title   = {Reversible Framework for Quantum Resource Theories},
  journal = {Physical Review Letters},
  volume  = {115},
  pages   = {070503},
  year    = {2015},
  doi     = {10.1103/PhysRevLett.115.070503}
}

@article{KondraDattaStreltsov2021,
  author  = {Tulja Varun Kondra and Chandan Datta and Alexander Streltsov},
  title   = {Catalytic Transformations of Pure Entangled States},
  journal = {Physical Review Letters},
  volume  = {127},
  pages   = {150503},
  year    = {2021},
  doi     = {10.1103/PhysRevLett.127.150503}
}

@article{EisertAudenaertPlenio2003,
   title={Remarks on entanglement measures and non-local state distinguishability},
   volume={36},
   ISSN={1361-6447},
   url={http://dx.doi.org/10.1088/0305-4470/36/20/316},
   DOI={10.1088/0305-4470/36/20/316},
   number={20},
   journal={Journal of Physics A: Mathematical and General},
   publisher={IOP Publishing},
   author={Eisert, J and Audenaert, K and Plenio, M B},
   year={2003},
   month=May, pages={5605–5615} }

@article{LamiRegula2023,
  author        = {Ludovico Lami and Bartosz Regula},
  title         = {No second law of entanglement manipulation after all},
  journal       = {Nature Physics},
  volume        = {19},
  pages         = {184--189},
  year          = {2023},
  doi           = {10.1038/s41567-022-01873-9},
  eprint        = {2111.02438},
  archivePrefix = {arXiv},
  primaryClass  = {quant-ph}
}

@article{RegulaLami2024,
  author        = {Bartosz Regula and Ludovico Lami},
  title         = {Reversibility of quantum resources through probabilistic protocols},
  journal       = {Nature Communications},
  volume        = {15},
  pages         = {3096},
  year          = {2024},
  doi           = {10.1038/s41467-024-47243-2},
  eprint        = {2309.07206},
  archivePrefix = {arXiv},
  primaryClass  = {quant-ph}
}

@article{LamiRegulaDNE2024,
  author        = {Ludovico Lami and Bartosz Regula},
  title         = {Distillable entanglement under dually non-entangling operations},
  journal       = {Nature Communications},
  volume        = {15},
  pages         = {10120},
  year          = {2024},
  doi           = {10.1038/s41467-024-54201-5},
  eprint        = {2307.11008},
  archivePrefix = {arXiv},
  primaryClass  = {quant-ph}
}

@article{BrandaoPlenio2008,
   title={Entanglement theory and the second law of thermodynamics},
   volume={4},
   ISSN={1745-2481},
   url={http://dx.doi.org/10.1038/nphys1100},
   DOI={10.1038/nphys1100},
   number={11},
   journal={Nature Physics},
   publisher={Springer Science and Business Media LLC},
   author={Brandão, Fernando G. S. L. and Plenio, Martin B.},
   year={2008},
   month=Oct, pages={873–877} }

@article{BrandaoPlenio2010,
   title={A Reversible Theory of Entanglement and its Relation to the Second Law},
   volume={295},
   ISSN={1432-0916},
   url={http://dx.doi.org/10.1007/s00220-010-1003-1},
   DOI={10.1007/s00220-010-1003-1},
   number={3},
   journal={Communications in Mathematical Physics},
   publisher={Springer Science and Business Media LLC},
   author={Brandão, Fernando G. S. L. and Plenio, Martin B.},
   year={2010},
   month=Feb, pages={829–851} }

@article{BrandaoPlenioStein2010,
   title={A Generalization of Quantum Stein’s Lemma},
   volume={295},
   ISSN={1432-0916},
   url={http://dx.doi.org/10.1007/s00220-010-1005-z},
   DOI={10.1007/s00220-010-1005-z},
   number={3},
   journal={Communications in Mathematical Physics},
   publisher={Springer Science and Business Media LLC},
   author={Brandão, Fernando G. S. L. and Plenio, Martin B.},
   year={2010},
   month=Feb, pages={791–828} }

@article{HayashiYamasaki2025,
  title={The generalized quantum Stein’s lemma and the second law of quantum resource theories},
  author={Hayashi, Masahito and Yamasaki, Hayata},
  journal={Nature Physics},
  volume={21},
  number={12},
  pages={1988--1993},
  year={2025},
  publisher={Nature Publishing Group UK London},
  doi={10.1038/s41567-025-03047-9}
}

@article{BertaEtAl2023,
   title={On a gap in the proof of the generalised quantum Stein's lemma and its consequences for the reversibility of quantum resources},
   volume={7},
   ISSN={2521-327X},
   url={http://dx.doi.org/10.22331/q-2023-09-07-1103},
   DOI={10.22331/q-2023-09-07-1103},
   journal={Quantum},
   publisher={Verein zur Forderung des Open Access Publizierens in den Quantenwissenschaften},
   author={Berta, Mario and Brandão, Fernando G. S. L. and Gour, Gilad and Lami, Ludovico and Plenio, Martin B. and Regula, Bartosz and Tomamichel, Marco},
   year={2023},
   month=Sept, pages={1103} }

\clearpage
\newpage
\onecolumngrid
\setcounter{equation}{0}
\setcounter{lemma}{0}
\renewcommand{\theequation}{S\arabic{equation}}
\renewcommand{\thelemma}{S\arabic{lemma}}

\begin{center}
\large\textbf{Supplemental material}
\end{center}

\section{Proof of Lemma~\ref{lem:superadditivity}}

For every bipartite system \(X:Y\), define the \(n\)-copy comparison set
\begin{equation}
 \cF_n(X:Y):=\cF(X^n:Y^n),
 \qquad n\geq1.
 \label{eq:F-n-copy}
\end{equation}
Consider the composite hypothesis-testing problem
\begin{equation}
 H_0:\ \xi_{XY}^{\otimes n},
 \qquad
 H_1:\ \tau_n\in\cF_n(X:Y).
 \label{eq:composite-hypotheses}
\end{equation}
For a test \(0\leq T_n\leq I\), the outcome corresponding to \(T_n\) means that \(H_0\) is accepted. Its type-I error and worst-case type-II error are
\begin{align}
 \alpha_n\!\left(T_n\middle|\xi_{XY}\right)
 &:=
 \Tr\!\left[(I-T_n)\xi_{XY}^{\otimes n}\right],
 \label{eq:type-I-definition}\\
 \beta_n\!\left(T_n\middle|\cF_n(X:Y)\right)
 &:=
 \sup_{\tau_n\in\cF_n(X:Y)}
 \Tr(T_n\tau_n) \notag\\
 &=
 h_{\cF_n(X:Y)}(T_n),
 \label{eq:type-II-definition}
\end{align}
where
\begin{equation}
 h_{\mathcal C}(M)
 :=
 \sup_{\tau\in\mathcal C}\Tr(M\tau)
 \label{eq:support-function-proof}
\end{equation}
is the support function of \(\mathcal C\). Thus, \(\alpha_n\) is the probability of rejecting \(H_0\) when \(H_0\) is true, whereas \(\beta_n\) is the largest probability of accepting \(H_0\) over all alternatives in \(\cF_n(X:Y)\).

We first verify that the generalized quantum Stein lemma \cite{BrandaoPlenioStein2010,BertaEtAl2023,HayashiYamasaki2025, FangFawziFawziAEP2024} is applicable. For any fixed bipartition \(X:Y\), conditions \textnormal{(iii)} and \textnormal{(iv)}, applied respectively to \(X^m:Y^m\) and \(X^n:Y^n\), give
\begin{align}
 \cF_m(X:Y)\otimes\cF_n(X:Y)
 &\subseteq
 \cF_{m+n}(X:Y),
 \label{eq:F-internal-tensor}\\
 \cF_m(X:Y)^\circ\otimes\cF_n(X:Y)^\circ
 &\subseteq
 \cF_{m+n}(X:Y)^\circ.
 \label{eq:F-internal-polar}
\end{align}
By the standing assumptions, every \(\cF_n(X:Y)\) is compact and convex, while condition~\textnormal{(i)} ensures that \(\cF_1(X:Y)\) contains a full-rank state. The latter implies the support condition
\begin{equation}
 \supp\xi_{XY}\subseteq\supp\tau_{XY}
\end{equation}
for some \(\tau_{XY}\in\cF_1(X:Y)\) and every state \(\xi_{XY}\). Finally, condition~\textnormal{(ii)} gives invariance of \(\cF_n(X:Y)\) under permutations of the \(n\) copies. Consequently, the sequence \(\{\cF_n(X:Y)\}_{n\geq1}\) satisfies Assumptions~1--3 of Ref.~\cite{BeigiRubboliTomamichel2026}.

The generalized quantum Stein lemma \cite{BrandaoPlenioStein2010,BertaEtAl2023,HayashiYamasaki2025, FangFawziFawziAEP2024}, in the form stated as Theorem~5 of Ref.~\cite{BeigiRubboliTomamichel2026} and proved in Ref.~\cite{FangFawziFawziAEP2024}, therefore gives
\begin{equation}
\begin{aligned}
 D_{\cF(X:Y)}^\infty(\xi_{XY})
 =
 \sup\biggl\{r\geq0:\;&
 \exists\,\{T_n\}_{n\geq1},\quad 0\leq T_n\leq I,\\
 &\alpha_n(T_n\mid\xi_{XY})\longrightarrow0,\\
 &\liminf_{n\to\infty}
 -\frac1n
 \log_2\beta_n\!\left(
 T_n\middle|\cF_n(X:Y)
 \right)
 \geq r
 \biggr\}.
\end{aligned}
\label{eq:generalized-Stein-form}
\end{equation}

We shall also use the support-function consequence of condition~\textnormal{(iv)}. By Ref.~\cite[Lemma~1, Eq.~(21)]{BeigiRubboliTomamichel2026}, the inclusion
\begin{equation}
 \cF(X:Y)^\circ\otimes\cF(X':Y')^\circ
 \subseteq
 \cF(XX':YY')^\circ
\end{equation}
is equivalent to
\begin{equation}
 h_{\cF(XX':YY')}(M\otimes N)
 \leq
 h_{\cF(X:Y)}(M)\,
 h_{\cF(X':Y')}(N)
 \label{eq:support-submultiplicativity-proof}
\end{equation}
for all \(M,N\geq0\).

Now let
\(\omega_{A_1A_2B_1B_2}\) be arbitrary, and denote its marginals by
\[
 \omega_i:=\omega_{A_iB_i},
 \qquad i=1,2.
\]
Set
\begin{equation}
 r_i
 :=
 D_{\cF(A_i:B_i)}^\infty(\omega_i),
 \qquad i=1,2,
\end{equation}
and fix \(\delta>0\). Applying \eqref{eq:generalized-Stein-form} to the two marginal bipartitions, we can choose tests \(0\leq T_{i,n}\leq I\) such that
\begin{equation}
 \alpha_n(T_{i,n}\mid\omega_i)\longrightarrow0,
 \qquad i=1,2,
 \label{eq:marginal-type-I-proof}
\end{equation}
and
\begin{equation}
 \liminf_{n\to\infty}
 -\frac1n
 \log_2\beta_n\!\left(
 T_{i,n}\middle|\cF_n(A_i:B_i)
 \right)
 \geq r_i-\delta,
 \qquad i=1,2.
 \label{eq:marginal-type-II-proof}
\end{equation}

After canonically regrouping tensor factors as
\begin{equation}
 (A_1A_2)^n:(B_1B_2)^n
 \cong
 A_1^nA_2^n:B_1^nB_2^n,
\end{equation}
consider the product test
\begin{equation}
 T_n:=T_{1,n}\otimes T_{2,n}.
 \label{eq:product-test-proof}
\end{equation}
Since
\begin{align}
 I-T_{1,n}\otimes T_{2,n}
 &=
 (I-T_{1,n})\otimes I
 +
 T_{1,n}\otimes(I-T_{2,n}) \notag\\
 &\leq
 (I-T_{1,n})\otimes I
 +
 I\otimes(I-T_{2,n}),
 \label{eq:operator-union-bound-proof}
\end{align}
and the corresponding marginals of \(\omega^{\otimes n}\) are \(\omega_1^{\otimes n}\) and \(\omega_2^{\otimes n}\), we obtain
\begin{align}
 \alpha_n(T_n\mid\omega)
 &=
 \Tr\!\left[(I-T_n)\omega^{\otimes n}\right] \notag\\
 &\leq
 \Tr\!\left[
 ((I-T_{1,n})\otimes I)\omega^{\otimes n}
 \right] \notag\\
 &\quad+
 \Tr\!\left[
 (I\otimes(I-T_{2,n}))\omega^{\otimes n}
 \right] \notag\\
 &=
 \alpha_n(T_{1,n}\mid\omega_1)
 +
 \alpha_n(T_{2,n}\mid\omega_2)
 \longrightarrow0.
 \label{eq:joint-type-I-proof}
\end{align}
Thus, \(\{T_n\}_{n\geq1}\) is an admissible test sequence for the joint null hypothesis \(\omega^{\otimes n}\). Notice that this step does not require \(\omega\) to be a product state.

On the alternative side, condition~\textnormal{(iv)} applied to \(A_1^n:B_1^n\) and \(A_2^n:B_2^n\), together with
\eqref{eq:support-submultiplicativity-proof}, gives
\begin{align}
 &\beta_n\!\left(
 T_n\middle|\cF_n(A_1A_2:B_1B_2)
 \right) \notag\\
 &\quad=
 h_{\cF(A_1^nA_2^n:B_1^nB_2^n)}
 \left(T_{1,n}\otimes T_{2,n}\right) \notag\\
 &\quad\leq
 h_{\cF(A_1^n:B_1^n)}(T_{1,n})\,
 h_{\cF(A_2^n:B_2^n)}(T_{2,n}) \notag\\
 &\quad=
 \beta_n\!\left(
 T_{1,n}\middle|\cF_n(A_1:B_1)
 \right)
 \beta_n\!\left(
 T_{2,n}\middle|\cF_n(A_2:B_2)
 \right).
 \label{eq:joint-type-II-proof}
\end{align}
It follows that
\begin{align}
 &\liminf_{n\to\infty}
 -\frac1n
 \log_2\beta_n\!\left(
 T_n\middle|\cF_n(A_1A_2:B_1B_2)
 \right) \notag\\
 &\quad\geq
 \liminf_{n\to\infty}
 \left[
 -\frac1n
 \log_2\beta_n\!\left(
 T_{1,n}\middle|\cF_n(A_1:B_1)
 \right)
 \right] \notag\\
 &\qquad+
 \liminf_{n\to\infty}
 \left[
 -\frac1n
 \log_2\beta_n\!\left(
 T_{2,n}\middle|\cF_n(A_2:B_2)
 \right)
 \right] \notag\\
 &\quad\geq r_1+r_2-2\delta.
 \label{eq:joint-exponent-proof}
\end{align}

Finally, applying the Stein characterization
\eqref{eq:generalized-Stein-form} to the bipartition
\(A_1A_2:B_1B_2\), and using
\eqref{eq:joint-type-I-proof} and
\eqref{eq:joint-exponent-proof}, yields
\begin{align}
 D_{\cF(A_1A_2:B_1B_2)}^\infty(\omega)
 &\geq r_1+r_2-2\delta \notag\\
 &=
 D_{\cF(A_1:B_1)}^\infty(\omega_1)
 +
 D_{\cF(A_2:B_2)}^\infty(\omega_2)
 -2\delta.
\end{align}
Since \(\delta>0\) is arbitrary,
\begin{equation}
 D_{\cF(A_1A_2:B_1B_2)}^\infty(\omega)
 \geq
 D_{\cF(A_1:B_1)}^\infty(\omega_1)
 +
 D_{\cF(A_2:B_2)}^\infty(\omega_2),
\end{equation}
which proves the lemma. 

\section{Proof of Theorem~\ref{thm:reg-ppt-k-corr-catal}}

For \(k=1\), the Rains sets satisfy the assumptions of Lemma~\ref{lem:superadditivity} by Ref.~\cite[Example~3]{BeigiRubboliTomamichel2026}. For every \(k\geq2\), compactness, convexity, permutation invariance, and the tensor and positive-polar tensor conditions follow from Ref.~\cite[Lemma~25]{FangFawziFawzi2026}, while the maximally mixed state belongs to \(\PPT(A:B)\subseteq\PPT_k(A:B)\) by Ref.~\cite[Proposition~S4]{WangJingZhu2025}. Thus, Lemma~\ref{lem:superadditivity} proves strong superadditivity of \(D_{\PPT_k}^{\infty}\) for every finite \(k\).

Tensor-product additivity follows from Ref.~\cite[Proposition~9]{BeigiRubboliTomamichel2026}. For \(k=1\), the required Assumption~4 is verified in Ref.~\cite[Example~3]{BeigiRubboliTomamichel2026}. For \(k\geq2\), Assumption~4 follows from the tensor and positive-polar tensor constructions of Ref.~\cite[Lemma~25]{FangFawziFawzi2026}, together with permutation invariance of the defining constraints and the fact that \(\PPT_k\) contains the maximally mixed state.

Monotonicity under completely PPT-preserving channels follows from Ref.~\cite[Proposition~S13]{LamiMeleRegula2025} and data processing of the relative entropy; regularization preserves this monotonicity. Asymptotic continuity follows by applying the continuity argument of Ref.~\cite[Lemma~7]{Winter2016} to \(D_{\PPT_k}\) and repeating the telescoping argument in the proof of Ref.~\cite[Corollary~8]{Winter2016}. As in that proof, the variation required in the telescoping step is bounded uniformly in the spectator system by monotonicity under partial trace, tensor-product subadditivity, and the fact that \(\PPT_k\) contains the maximally mixed state.

Combining monotonicity, tensor-product additivity, strong superadditivity, and lower semicontinuity with the correlated-catalysis criterion of Ref.~\cite[Sec.~IV, Lemma~6]{FundamentalLimitsCorrelatedCatalysis} proves Theorem~\ref{thm:reg-ppt-k-corr-catal}.

\section{Proof of Theorem 2}

Define
\begin{equation}
\PPT_\infty(A:B):=\bigcap_{k\geq1}\PPT_k(A:B).
\label{eq:ppt-infty}
\end{equation}
For $k\geq0$ and $X\geq0$, define~\cite{LamiMeleRegula2025}
\begin{equation}
\chi_k(X)
:=
\min\left\{
\Tr S_k:
-S_0\leq X^{\Gamma_B}\leq S_0,\ 
-S_i\leq S_{i-1}^{\Gamma_B}\leq S_i,\ 
1\leq i\leq k
\right\}.
\label{eq:chi-def}
\end{equation}
Unrolling the recursive definition of $\PPT_k$ gives
\begin{equation}
\PPT_k(A:B)
=
\{\tau\geq0:\chi_{k-1}(\tau)\leq1\}.
\label{eq:ppt-chi}
\end{equation}
The quantities $\chi_p$ satisfy
\begin{equation}
\chi_k(tX)=t\chi_k(X),
\qquad t\geq0,
\label{eq:chi-scaling}
\end{equation}
and are nondecreasing in $k$~\cite{LamiMeleRegula2025},
\begin{equation}
\chi_k(X)\leq\chi_{k+1}(X).
\label{eq:chi-monotone}
\end{equation}
Let \(\EcPPT(\sigma)\) denote the zero-error asymptotic entanglement cost of \(\sigma\) under completely PPT-preserving channels.

Moreover, for every bipartite state $\sigma$,
Ref.~\cite[Theorem~1]{LamiMeleRegula2025} shows that
\begin{equation}
\lim_{k\to\infty}\log_2\chi_k(\sigma)
=
\EcPPT(\sigma),
\end{equation}
or equivalently,
\begin{equation}
\lim_{k\to\infty}\chi_k(\sigma)
=
2^{\EcPPT(\sigma)}.
\label{eq:chi-limit}
\end{equation}

For any nonzero $\tau\geq0$, write
\begin{equation}
\tau=t\sigma,
\qquad
t=\Tr\tau,
\qquad
\sigma\in\mathcal D(AB).
\end{equation}
By Eqs.~\eqref{eq:ppt-chi} and \eqref{eq:chi-scaling},
$\tau\in\PPT_\infty(A:B)$ if and only if
\begin{equation}
t\chi_k(\sigma)\leq1
\qquad
\text{for every }k\geq0.
\end{equation}
Since $\chi_p(\sigma)$ is nondecreasing, Eq.~\eqref{eq:chi-limit}
shows that this is equivalent to
\begin{equation}
t\leq2^{-\EcPPT(\sigma)}.
\end{equation}
Hence
\begin{equation}
\PPT_\infty(A:B)
=
\left\{
t\sigma:\ 
\sigma\in\mathcal D(AB),\
0\leq t\leq2^{-\EcPPT(\sigma)}
\right\}.
\label{eq:ppt-infty-cost}
\end{equation}

We next relate $\PPT_\infty$ to the regularized PPT-relative entropy.

\begin{lemma}[Exact-cost domination]
\label{lem:exact-cost-domination}
For all bipartite states $\rho$ and $\sigma$,
\begin{equation}
D_{\PPT}^{\infty}(\rho)
\leq
D(\rho\Vert\sigma)+\EcPPT(\sigma).
\label{eq:cost-domination}
\end{equation}
\end{lemma}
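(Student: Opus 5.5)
Let \(c:=\EcPPT(\sigma)\); I will assume \(c<\infty\), since otherwise the claim is trivial. The idea is to turn an exact (zero-error) PPT preparation of \(\sigma^{\otimes n}\) into a PPT reference state that \(\rho^{\otimes n}\) can be compared against. By definition of the zero-error cost, for every \(\delta>0\) and all large \(n\) there is a completely PPT-preserving channel \(\Lambda_n\) with
\[
\Lambda_n(\Phi_{m_n})=\sigma^{\otimes n},\qquad m_n\le 2^{n(c+\delta)},
\]
where \(\Phi_{m}\) is the maximally entangled state of Schmidt rank \(m\). I will invoke the standard fact that \(\Phi_m\) decomposes as
\[
\Phi_m=\bigl(1+\tfrac1m\bigr)\,\pi_m-\tfrac1m\,\pi'_m ,
\]
or, more usefully, that it satisfies an operator inequality \(\Phi_m\le m\,\gamma_m\) for a PPT state \(\gamma_m\). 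Concretely, take \(\gamma_m=(\Phi_m+(I-\Phi_m)/(m+1))/2\cdot(\text{normalization})\), an isotropic state at the PPT boundary with fidelity \(1/m\): then \(\Phi_m\le m\,\gamma_m\).

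Applying the positive map \(\Lambda_n\) preserves this inequality, so
\[
\sigma^{\otimes n}\le m_n\,\sigma_n,\qquad \sigma_n:=\Lambda_n(\gamma_m),
\]
and \(\sigma_n\in\PPT(A^n:B^n)\) because completely PPT-preserving maps send PPT states to PPT states. Operator monotonicity of the logarithm then gives \(\log\sigma^{\otimes n}\le \log m_n+\log\sigma_n\) (when \(\supp\rho\not\subseteq\supp\sigma\) the right-hand side of the lemma is infinite, so the claim is trivial). Hence
\[
D(\rho^{\otimes n}\Vert\sigma_n)\le D(\rho^{\otimes n}\Vert\sigma^{\otimes n})+\log_2 m_n = nD(\rho\Vert\sigma)+\log_2 m_n .
\]
Dividing by \(n\), minimizing over PPT states, letting \(n\to\infty\) and then \(\delta\to0\) yields \(D_{\PPT}^{\infty}(\rho)\le D(\rho\Vert\sigma)+c\).

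The main obstacle is the claim \(\Phi_m\le m\gamma_m\) with \(\gamma_m\) PPT, and the handling of the cost definition. For the first, I would use the isotropic state \(\gamma=p\Phi_m+(1-p)(I-\Phi_m)/(m^2-1)\), which is PPT iff \(p\le 1/m\). Taking \(p=1/m\) gives \(m\gamma\ge\Phi_m\) directly on the \(\Phi_m\) component and is positive elsewhere.

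For the second, if \(\EcPPT\) is defined as a limit or infimum of rates, I only need a sequence \(n_j\to\infty\) of exact protocols. To control all \(n\), I would instead use the regularization limit along that subsequence, which exists because \(D_{\PPT}\) is subadditive. Alternatively, I would avoid operational definitions entirely and use Eq.~\eqref{eq:ppt-infty-cost}: \(2^{-c}\sigma\in\PPT_\infty\). I would not take that route, however, since \(\PPT_\infty\) membership does not directly give a PPT state, and the comparison to \(\PPT\) is exactly what this lemma is meant to supply.
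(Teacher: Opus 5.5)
Your proposal is correct and follows the paper's argument: take an exact protocol $\Lambda_n(\Phi_{M_n})=\sigma^{\otimes n}$ with $\log_2 M_n\le nR$, and bound $\Phi_M$ above by $M$ times a free state. Pushing this through $\Lambda_n$ gives a PPT state $\omega_n\ge M_n^{-1}\sigma^{\otimes n}$, and operator monotonicity of the logarithm finishes the proof.

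The only difference is cosmetic. The paper uses the classically correlated state $\zeta_M=\frac1M\sum_i|ii\rangle\langle ii|\ge\frac1M\Phi_M$ instead of your boundary isotropic state. Your isotropic state is normalized as $(\Phi_m+(I-\Phi_m)/(m+1))/m$; the factor $1/2$ in your first description should be dropped. Your concern about having exact protocols only along a subsequence is harmless, because the regularized limit exists.
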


\begin{proof}
The claim is trivial if $D(\rho\Vert\sigma)=\infty$. Otherwise, fix
$R>\EcPPT(\sigma)$. Let
\begin{equation}
\Phi_M:=|\Phi_M\rangle\!\langle\Phi_M|,
\qquad
|\Phi_M\rangle:=\frac{1}{\sqrt M}\sum_{i=1}^M|ii\rangle
\end{equation}
denote the maximally entangled state of Schmidt rank $M$.
By the definition of the exact PPT entanglement cost, for all
sufficiently large $n$ there exist a completely PPT-preserving channel
$\Lambda_n$ and an integer $M_n$ satisfying
\begin{equation}
\log_2M_n\leq nR,
\qquad
\Lambda_n(\Phi_{M_n})=\sigma^{\otimes n}.
\end{equation}
Define
\begin{equation}
\zeta_M:=\frac1M\sum_{i=1}^M|ii\rangle\!\langle ii|.
\end{equation}
Since $\zeta_M$ is separable and
\begin{equation}
\zeta_M\geq\frac1M\Phi_M,
\end{equation}
the state
\begin{equation}
\omega_n:=\Lambda_n(\zeta_{M_n})
\end{equation}
is PPT and satisfies
\begin{equation}
\omega_n\geq M_n^{-1}\sigma^{\otimes n}.
\end{equation}
Operator monotonicity of the logarithm gives
\begin{equation}
\begin{aligned}
D_{\PPT(A^n:B^n)}(\rho^{\otimes n})
&\leq D(\rho^{\otimes n}\Vert\omega_n)\\
&\leq D(\rho^{\otimes n}\Vert\sigma^{\otimes n})
+\log_2M_n\\
&\leq n\bigl(D(\rho\Vert\sigma)+R\bigr).
\end{aligned}
\end{equation}
Dividing by $n$ and taking $n\to\infty$ gives
\begin{equation}
D_{\PPT}^{\infty}(\rho)
\leq D(\rho\Vert\sigma)+R.
\end{equation}
Taking the infimum over all $R>\EcPPT(\sigma)$ proves
Eq.~\eqref{eq:cost-domination}.
\end{proof}

Using Eq.~\eqref{eq:ppt-infty-cost} and
\begin{equation}
D(\rho\Vert t\sigma)
=
D(\rho\Vert\sigma)-\log_2t,
\qquad t>0,
\end{equation}
we obtain
\begin{equation}
\begin{aligned}
D_{\PPT_\infty(A:B)}(\rho)
&=
\min_{\sigma\in\mathcal D(AB)}
\min_{0<t\leq2^{-\EcPPT(\sigma)}}
\left\{
D(\rho\Vert\sigma)-\log_2t
\right\}\\
&=
\min_{\sigma\in\mathcal D(AB)}
\left\{
D(\rho\Vert\sigma)+\EcPPT(\sigma)
\right\}.
\end{aligned}
\label{eq:inf-convolution}
\end{equation}
Lemma~\ref{lem:exact-cost-domination}, applied to $\rho^{\otimes n}$,
therefore gives
\begin{equation}
D_{\PPT_\infty(A^n:B^n)}(\rho^{\otimes n})
\geq
D_{\PPT}^{\infty}(\rho^{\otimes n})
=
nD_{\PPT}^{\infty}(\rho).
\end{equation}
Dividing by $n$ and taking the regularization yields
\begin{equation}
D_{\PPT_\infty}^{\infty}(\rho)
\geq
D_{\PPT}^{\infty}(\rho).
\end{equation}
Conversely, since
\begin{equation}
\PPT(A^n:B^n)\subseteq\PPT_\infty(A^n:B^n)
\end{equation}
for every $n$,
\begin{equation}
D_{\PPT_\infty}^{\infty}(\rho)
\leq
D_{\PPT}^{\infty}(\rho).
\end{equation}
Thus
\begin{equation}
D_{\PPT_\infty}^{\infty}(\rho)
=
D_{\PPT}^{\infty}(\rho).
\label{eq:ppt-infty-equality}
\end{equation}

It remains to verify the tensor conditions for \(\PPT_\infty\). By Eq.~\eqref{eq:ppt-infty-cost}, tensor closure follows from the full tensor additivity of the exact PPT entanglement cost~\cite{LamiMeleRegula2025}. If
\begin{equation}
t\sigma\in\PPT_\infty(A:B),
\qquad
s\eta\in\PPT_\infty(A':B'),
\end{equation}
then
\begin{equation}
ts
\leq
2^{-\EcPPT(\sigma)-\EcPPT(\eta)}
=
2^{-\EcPPT(\sigma\otimes\eta)},
\end{equation}
and hence
\begin{equation}
(t\sigma)\otimes(s\eta)
\in
\PPT_\infty(AA':BB').
\end{equation}

For the positive-polar condition, recall that $\PPT_\infty=\bigcap_{k\geq1}\PPT_k$. Since the $\PPT_k$ form a decreasing sequence of compact sets, for every $X\geq0$,
\begin{equation}
h_{\PPT_\infty(A:B)}(X)
=
\lim_{k\to\infty}h_{\PPT_k(A:B)}(X).
\label{eq:ppt-infty-support}
\end{equation}
Indeed, let $\tau_k\in\PPT_k(A:B)$ attain $h_{\PPT_k(A:B)}(X)$. By compactness of $\PPT_1(A:B)$, a subsequence $\tau_{k_j}$ converges to some $\tau_*$. For every fixed $r$, $\tau_{k_j}\in\PPT_r(A:B)$ for all sufficiently large $j$; hence, since $\PPT_r(A:B)$ is closed, $\tau_*\in\PPT_r(A:B)$. Thus $\tau_*\in\PPT_\infty(A:B)$, which proves Eq.~\eqref{eq:ppt-infty-support}.

Using the finite-level positive-polar tensor condition and
Eq.~\eqref{eq:ppt-infty-support}, for $X,Y\geq0$ we obtain
\begin{equation}
\begin{aligned}
h_{\PPT_\infty(AA':BB')}(X\otimes Y)
&=
\lim_{k\to\infty}
h_{\PPT_k(AA':BB')}(X\otimes Y)\\
&\leq
\lim_{k\to\infty}
h_{\PPT_k(A:B)}(X)
h_{\PPT_k(A':B')}(Y)\\
&=
h_{\PPT_\infty(A:B)}(X)
h_{\PPT_\infty(A':B')}(Y).
\end{aligned}
\end{equation}
By Ref.~\cite[Lemma~1]{BeigiRubboliTomamichel2026},
the above support-function inequality is equivalent to
\begin{equation}
\PPT_\infty(A:B)^\circ\otimes
\PPT_\infty(A':B')^\circ
\subseteq
\PPT_\infty(AA':BB')^\circ.
\end{equation}
The remaining assumptions of Lemma~\ref{lem:superadditivity} are immediate:
$\PPT_\infty$ is compact and convex as an intersection of the
$\PPT_k$, contains every PPT state and hence a full-rank state, and
is permutation invariant. Lemma~\ref{lem:superadditivity} therefore implies that
$D_{\PPT_\infty}^{\infty}$ is strongly superadditive.

Tensor-product additivity follows from Ref.~\cite[Proposition~9]{BeigiRubboliTomamichel2026}. To apply that result, consider finite-dimensional bipartite systems $A_j:B_j$, $j=1,\ldots,r$, and the family
\begin{equation}
\mathcal F_{\vec m}
:=
\PPT_\infty\!\left(
\bigotimes_{j=1}^r A_j^{m_j}:
\bigotimes_{j=1}^r B_j^{m_j}
\right),
\qquad
\vec m=(m_1,\ldots,m_r).
\end{equation}
We verify Assumption~4 of Ref.~\cite{BeigiRubboliTomamichel2026}.
Assumption~4.A follows from compactness and convexity of
$\PPT_\infty$. Assumptions~4.B and~4.C follow by repeated application
of the tensor closure and positive-polar tensor closure established
above. Assumption~4.D holds because the maximally mixed state belongs
to $\PPT\subseteq\PPT_\infty$ and is full rank. Finally,
Assumption~4.E follows from permutation invariance of
$\PPT_\infty$. Hence the complete Assumption~4 is satisfied, and
Ref.~\cite[Proposition~9]{BeigiRubboliTomamichel2026} gives
\begin{equation}
D_{\PPT_\infty(A_1A_2:B_1B_2)}^\infty(\rho\otimes\sigma)
=
D_{\PPT_\infty(A_1:B_1)}^\infty(\rho)
+
D_{\PPT_\infty(A_2:B_2)}^\infty(\sigma).
\label{eq:ppt-infty-additivity}
\end{equation}

Together with ordinary monotonicity under completely PPT-preserving maps and lower semicontinuity, the tensor-product additivity and strong superadditivity established above imply, by Ref.~\cite[Lemma~6]{FundamentalLimitsCorrelatedCatalysis}, that $D_{\PPT}^{\infty}$ is monotone under correlated catalytic transformations implemented by completely PPT-preserving maps.
This completes the proof.

\section{Proof of Theorem 3}

It was shown in Ref.~\cite{LamiRegulaStreltsov2024} that any normalized,
additive, strongly superadditive, and asymptotically continuous monotone
\(M\) yields the correlated-catalytic rate bounds
\begin{equation}
E_{d,\PPT}^{\rm cc}(\rho)
\leq M(\rho)
\leq E_{c,\PPT}^{\rm cc}(\rho).
\label{eq:cc-general-rate-bounds}
\end{equation}
Here
\begin{equation}
E_{d,\PPT}^{\rm cc}(\rho)
:=
\sup\left\{
R\geq0:
\begin{array}{l}
\exists\,\tau_n\ \text{and a completely PPT-preserving channel }\Lambda_n
\text{ such that}\\[1mm]
\omega_n=\Lambda_n(\rho^{\otimes n}\otimes\tau_n),\qquad
\omega_{n,\rm cat}=\tau_n,\\[1mm]
\displaystyle
\frac12\left\|
\omega_{n,\rm sys}
-\Phi_2^{\otimes\lfloor Rn\rfloor}
\right\|_1
\xrightarrow[n\to\infty]{}0
\end{array}
\right\},
\label{eq:cc-distillation-def}
\end{equation}
and
\begin{equation}
E_{c,\PPT}^{\rm cc}(\rho)
:=
\inf\left\{
R\geq0:
\begin{array}{l}
\exists\,\tau_n\ \text{and a completely PPT-preserving channel }\Lambda_n
\text{ such that}\\[1mm]
\omega_n=\Lambda_n(\Phi_2^{\otimes\lceil Rn\rceil}\otimes\tau_n),\qquad
\omega_{n,\rm cat}=\tau_n,\\[1mm]
\displaystyle
\frac12\left\|
\omega_{n,\rm sys}
-\rho^{\otimes n}
\right\|_1
\xrightarrow[n\to\infty]{}0
\end{array}
\right\}.
\label{eq:cc-cost-def}
\end{equation}
Here \(\omega_{n,\rm sys}\) and \(\omega_{n,\rm cat}\) denote the system and
catalyst marginals of the final state, respectively; no product structure
between them is required.

Applying Eq.~\eqref{eq:cc-general-rate-bounds} to
\(D_{\PPT_1}^{\infty}\) for distillation and to
\(D_{\PPT}^{\infty}\) for formation, whose required properties were
established above, gives
\begin{equation}
E_{d,\PPT}^{\rm cc}(\rho)
\leq D_{\PPT_1}^{\infty}(\rho),
\qquad
D_{\PPT}^{\infty}(\rho)
\leq E_{c,\PPT}^{\rm cc}(\rho).
\label{eq:cc-ppt-monotone-bounds}
\end{equation}

For the state \(\rho_v\), Wang and Duan proved~\cite{WangDuan2017}
\begin{equation}
\begin{aligned}
E_{d,\PPT}(\rho_v)
&=
D_{\PPT_1}^{\infty}(\rho_v)
=
\log_2\!\left(1+\frac1{\sqrt2}\right),\\
D_{\PPT}^{\infty}(\rho_v)
&=
E_{c,\PPT}(\rho_v)
=
1.
\end{aligned}
\label{eq:wang-duan-exact-values}
\end{equation}

Since every ordinary PPT protocol is a correlated-catalytic PPT protocol
with a trivial catalyst,
\begin{equation}
E_{d,\PPT}^{\rm cc}(\rho_v)
\geq E_{d,\PPT}(\rho_v),
\qquad
E_{c,\PPT}^{\rm cc}(\rho_v)
\leq E_{c,\PPT}(\rho_v).
\label{eq:trivial-catalyst-bounds}
\end{equation}

Combining Eqs.~\eqref{eq:cc-ppt-monotone-bounds}--\eqref{eq:trivial-catalyst-bounds}
yields
\begin{equation}
E_{d,\PPT}^{\rm cc}(\rho_v)
=
\log_2\!\left(1+\frac1{\sqrt2}\right)
<
1
=
E_{c,\PPT}^{\rm cc}(\rho_v),
\label{eq:cc-ppt-irreversibility-supp}
\end{equation}
which proves the claim.

\section{Application to resource theory of magic}
The lemma \ref{lem:superadditivity} applies beyond entanglement theory, in particular to the
resource theory of magic for systems composed of qudits of odd dimensions
equipped with the discrete Wigner representation. Define the comparison set
\begin{equation}
\mathcal W(A)
:=
\left\{
\tau_A\geq0:
\|\tau_A\|_{W,1}\leq1
\right\},
\end{equation}
where \(\|\cdot\|_{W,1}\) denotes the Wigner trace norm
\cite{WangWildeSu2020}. The corresponding relative-entropy distance is known
as the thauma, with regularization
\begin{equation}
D_{\mathcal W}^{\infty}(\rho)
=
\lim_{n\to\infty}
\frac{1}{n}
D_{\mathcal W}\bigl(\rho^{\otimes n}\bigr)
\end{equation}
\cite{WangWildeSu2020,WangWildeSu2019}.

The sets \(\mathcal W\) satisfy the relevant tensor and positive-polar tensor
conditions \cite[Example~4]{BeigiRubboliTomamichel2026}; hence
Lemma~\ref{lem:superadditivity} implies that
\(D_{\mathcal W}^{\infty}\) is strongly superadditive. Its tensor-product
additivity follows from Ref.~\cite[Proposition~9 and Example~4]
{BeigiRubboliTomamichel2026}, while its monotonicity under completely
positive Wigner-preserving channels follows from
Ref.~\cite{WangWildeSu2019}. Moreover, the sets \(\mathcal W\) are compact
and convex, contain the maximally mixed state, and are closed under tensor
products. Applying Ref.~\cite[Lemma~7]{Winter2016} together with the
telescoping argument in the proof of Ref.~\cite[Corollary~8]{Winter2016}
therefore establishes asymptotic continuity of
\(D_{\mathcal W}^{\infty}\). These properties imply that
\(D_{\mathcal W}^{\infty}\) is monotone under correlated catalytic
transformations implemented by completely positive Wigner-preserving
channels, and hence in particular under correlated catalytic stabilizer
operations. To the best of our knowledge, the strong superadditivity and the
resulting correlated-catalytic monotonicity of the regularized thauma have not been observed previously.

\end{document}